\title{GIST: Greedy Independent Set Thresholding for Max-Min Diversification with Submodular Utility}
\author{%
  Matthew Fahrbach\thanks{Equal contribution.} \\
  Google\\
  \texttt{fahrbach@google.com} \\
  \And
  Srikumar Ramalingam\footnotemark[1] \\
  Google\\
  \texttt{rsrikumar@google.com} \\
  \And
  Morteza Zadimoghaddam\footnotemark[1] \\
  Google\\
  \texttt{zadim@google.com} \\
  \And
  Sara Ahmadian \\
  Google\\
  \texttt{sahmadian@google.com} \\
  \And
  Gui Citovsky \\
  Google\\
  \texttt{gcitovsky@google.com} \\
  \And
  Giulia DeSalvo \\
  Google\\
  \texttt{giuliad@google.com} \\
}
\newcommand{\declarecolor}[2]{\definecolor{#1}{RGB}{#2}\expandafter\newcommand\csname #1\endcsname[1]{\textcolor{#1}{##1}}}
\renewcommand{\emptyset}{\varnothing}
\renewcommand{\epsilon}{\varepsilon}
\newcommand{\mat}[1]{\mathbf{#1}}
\newcommand{\Z}{\mathbb{Z}}
\newcommand{\R}{\mathbb{R}}
\newcommand{\e}{\mathrm{e}}
\newcommand{\F}{\mathcal{F}}
\newcommand{\NP}{\textsc{NP}}
\newcommand{\DT}{D}
\newcommand{\NaturalNumbers}{\Z_{\ge 0}}
\newcommand{\OPT}{\textnormal{OPT}}
\newcommand{\ALG}{\textnormal{ALG}}
\newcommand{\DiverseDataSummarization}{\textnormal{\textsf{MDMS}}\xspace}
\DeclareMathOperator*{\argmax}{arg\,max}
\newcommand{\GreedyIndependentSet}{\textnormal{\texttt{GreedyIndependentSet}}\xspace}
\newcommand{\GIST}{\textnormal{\texttt{GIST}}\xspace}
\renewcommand{\div}{\texttt{div}}
\newcommand{\dist}{\textnormal{dist}}
\DeclarePairedDelimiter{\set}{\{}{\}}
\DeclarePairedDelimiter{\parens}{(}{)}
\DeclarePairedDelimiter{\bracks}{[}{]}
\DeclarePairedDelimiter{\norm}{\lVert}{\rVert}
\theoremstyle{plain}
\newtheorem{theorem}{Theorem}[section]
\newtheorem{lemma}[theorem]{Lemma}
\theoremstyle{definition}
\theoremstyle{remark}
\newtheorem{remark}[theorem]{Remark}
\begin{document}

\maketitle

\begin{abstract}
This work studies a novel subset selection problem called \emph{max-min diversification with monotone submodular utility} (\DiverseDataSummarization), which has a wide range of applications in machine learning, e.g., data sampling and feature selection.
Given a set of points in a metric space,
the goal of \DiverseDataSummarization is to maximize $f(S) = g(S) + \lambda \cdot \div(S)$
subject to a cardinality constraint $|S| \le k$,
where
$g(S)$ is a monotone submodular function
and
$\div(S) = \min_{u,v \in S : u \ne v} \dist(u,v)$ is the \emph{max-min diversity} objective.
We propose the $\GIST$ algorithm, which gives a $\sfrac{1}{2}$-approximation guarantee for \DiverseDataSummarization
by approximating a series of maximum independent set problems with a bicriteria greedy algorithm.
We also prove that it is NP-hard to approximate within a factor of $0.5584$.
Finally, we show in our empirical study that $\GIST$ outperforms state-of-the-art benchmarks
for a single-shot data sampling task on ImageNet.
\end{abstract}

\section{Introduction}
\label{sec:introduction}

Subset selection is a ubiquitous and challenging problem at the intersection of machine learning and combinatorial optimization, finding applications in areas such as feature selection, recommendation systems, and data summarization, including the critical task of designing pretraining sets for large language models~\citep{team2024gemini,Albalak2024ASO}.
Data sampling in particular is an increasingly important problem given the unprecedented and continuous streams of data collection.
For example,
one self-driving vehicle generates \textasciitilde 80 terabytes of data daily from LiDAR and imaging devices~\citep{Kazhamiaka2021AV}, and academic datasets have scaled dramatically from 1.2M images in ImageNet~\citep{russakovsky2014imagenet} to 5B in Laion~\citep{Schuhmann2022LAION}.

Subset selection often involves balancing competing objectives: we rely on the \emph{utility} (or weight) of individual items to prioritize them, while simultaneously trying to avoid selecting duplicate or near-duplicate items to ensure \emph{diversity}. When selecting a small subset, this process should guarantee that the chosen set is a good representation of the original dataset---often called \emph{coverage}.
The intricate trade-offs between utility, diversity, and coverage are often expressed through an objective function.
It is often a significant challenge to design efficient algorithms with strong approximation guarantees for constrained subset selection problems,
even when leveraging tools and techniques from combinatorial optimization
such as submodular maximization, $k$-center clustering, and convex hull approximations.

This paper studies a novel subset selection problem called \emph{max-min diversification with monotone submodular utility} (\DiverseDataSummarization).
Given $n$ points $V$ in a metric space,
a nonnegative monotone submodular function $g : 2^V \rightarrow \R_{\ge 0}$,
diversity strength $\lambda \ge 0$,
and cardinality constraint $k$,
our goal is to solve
\begin{align}
\label{eq.utility_and_maxmin}
    S^{*} &= \argmax_{S \subseteq V} ~g(S) + \lambda \cdot \div(S) \\
    &\hspace{0.48cm} \text{subject to}~~ |S| \le k, \nonumber
\end{align}
where $\div(S) = \min_{u,v\in S:u \ne v} \dist(u,v)$ is the max-min diversification term~\citep{addanki2022improved,lozano2022max,kumpulainen2024max}.
Maximizing the submodular term $g(S)$ aims to select points that are the most valuable or informative.
A purely utility-driven algorithm, however, can lead to redundancies where very similar points are chosen.
To counteract this, we add the $\lambda \cdot \div(S)$ term to encourage selecting points that are well spread out in the metric space, which effectively penalizes subsets with closely clustered elements. Note that $\lambda$ is a knob that controls the trade-off between these two terms, acting as a regularization strength for subset selection problems.

The most relevant prior work is by \citet{borodin2017max}.
They combine a monotone submodular utility
and the \emph{max-sum diversity} term $\sum_{u,v \in S} \dist(u, v)$,
in contrast to our max-min $\div(S)$ in \eqref{eq.utility_and_maxmin}.
They give a greedy $\sfrac{1}{2}$-approximation algorithm
and extend their results to general matroid constraints via local search.
An earlier contribution by~\citet{gollapudi2009axiomatic} combined a \emph{linear utility} (instead of a monotone submodular function)
and the max-sum diversity term,
subject to the strict constraint $|S| = k$,
to design a ranking system for search engines.
These approaches focus on maximizing the sum of pairwise distances,
but our max-min formulation ensures that the selected points have high utility and are maximally separated, leading to less redundant and more representative subsets.

\subsection{Our contributions}

We summarize the main contributions of this work below:
\begin{itemize}
    \item In \Cref{sec:preliminaries}, we formalize the \DiverseDataSummarization problem and present a simple $0.387$-approximation algorithm as a warm-up to show how the competing terms in the objective function interact.

    \item In \Cref{sec:algorithm}, we present the \GIST algorithm,
    which achieves a $\sfrac{1}{2}$-approximation guarantee for \DiverseDataSummarization by approximating a series of maximum independent set problems with a bicriteria greedy algorithm and returning the best solution.
    In the special case of \emph{linear utility functions} $g(S) = \sum_{v \in S} w(v)$,
    we prove that \GIST offers a stronger $\sfrac{2}{3}$-approximation guarantee.

    \item In \Cref{sec:hardness}, we prove that it is NP-hard to approximate \DiverseDataSummarization to within a factor of $0.5584$ via a careful reduction from the maximum coverage problem.
    For linear utilities, we \emph{match the guarantees} of \GIST and prove a tight $(\sfrac{2}{3}+\varepsilon)$-hardness of approximation, for any $\varepsilon > 0$, assuming $\text{P} \ne \text{NP}$.
    In the more restricted case of a linear utility function and the Euclidean metric, we prove APX-completeness.

    \item In \Cref{sec:experiments}, our experiments show that \GIST outperforms baselines for \DiverseDataSummarization on synthetic data (in particular the classic greedy algorithm).
    Then we show that \GIST can be used to build better single-shot subsets of training data for an image classification benchmark on ImageNet compared to margin sampling and $k$-center algorithms,
    demonstrating the benefit of optimizing for a blend of utility and diversity.
\end{itemize}

\subsection{Related work}

\paragraph{Submodular maximization.}
Submodular maximization subject to a cardinality constraint $k$ is an NP-hard problem,
i.e., $\max_{S \subseteq V : |S| \le k} g(S)$.
For monotone submodular functions, \citet{Nemhauser_MP1978} proved that the greedy algorithm achieves a $(1-\sfrac{1}{e})$-approximation, which is asymptotically optimal unless $\text{P} = \text{NP}$~\citep{feige1998threshold}.
The non-monotone case is substantially less understood, but \citet{buchbinder2024constrained} recently gave a 0.401-approximation algorithm
and \citet{qi2024maximizing} improved the hardness of approximation to 0.478.
Over the last decade, there has also been significant research on distributed algorithms for submodular maximization in the MapReduce~\citep{mirzasoleiman2013distributed,mirrokni2015randomized,barbosa2015power,barbosa2016new,DBLP:conf/soda/LiuV19,kazemi2021regularized} and low-adaptivity models~\citep{balkanski2018adaptive, ene2019submodular, fahrbach2019submodular, balkanski2019exponential, amanatidis2021submodular, chen2024practical,chen2025breaking}.

\paragraph{Diversity maximization.}
The related \emph{max-min diversification} problem $\max_{S \subseteq V : |S| = k} \div(S)$ has a rich history in operations research due to its connection to facility location,
and is also called the \emph{$p$-dispersion} problem.
It uses the \emph{strict constraint} $|S| = k$;
otherwise, if $|S| \le k$ we can maximize $\div(S)$ with two diametrical points.
\citet{tamir1991obnoxious} gave a simple greedy $\sfrac{1}{2}$-approximation algorithm for max-min diversification, and
\citet{ravi1994heuristic} proved $(\sfrac{1}{2} + \varepsilon)$-inapproximability, for any $\varepsilon > 0$, unless $\text{P} = \text{NP}$.
\citet{indyk2014composable} gave a distributed $\sfrac{1}{3}$-approximation algorithm via composable coresets for diversity and coverage maximization.
\citet{borassi2019better} designed a $\sfrac{1}{5}$-approximation algorithm in the sliding window model.
For asymmetric distances, \citet{kumpulainen2024max} recently designed a $1/(6k)$-approximation algorithm.
There are also many related works on mixed-integer programming formulations, heuristics, and applications to drug discovery~\citep{erkut1990discrete,resende2010grasp,sayyady2016integer,tutunchi2019effective,kudela2020social,lozano2022max,wang2023max,meinl2011maximum}.

\citet{bhaskara2016linear} studied the \emph{sum-min diversity} problem $\max_{S \subseteq V : |S| \le k} \sum_{u \in S} \dist(u, S \setminus \{u\})$,
providing the first constant-factor approximation algorithm with a $\sfrac{1}{8}$ guarantee.
Their algorithm is based on a novel linear-programming relaxation
and generalizes to matroid constraints.
They also proved the first inapproximability result of $\sfrac{1}{2}$ under the planted clique assumption.

Diversity maximization has also recently been studied with \emph{fairness constraints}.
Given a partition of the points into $m$ groups, they ensure
$k_i \in [\ell_i, u_i]$ points are selected from each group $i \in [m]$, subject to $|S| = k = \sum_{i=1}^m k_i$.
This line of work can be categorized as studying fair
max-min~\citep{moumoulidou2020diverse,addanki2022improved,wang2023max},
sum-min~\citep{bhaskara2016linear,mahabadi2023core},
and max-sum~\citep{abbassi2013diversity,mahabadi2023core} diversity objectives.

\paragraph{Data sampling.}
In the realm of dataset curation and active learning, there have been several lines of work that study the combination of utility and diversity terms.
Greedy coreset methods based on $k$-center have been successfully used for data selection with and without utility terms~\citep{ramalingam2023weightedkcenteralgorithmdata, sener2018active}. \citet{ash2019deep} use $k$-means++ seeding over the gradient space of a model to balance uncertainty and diversity. \citet{wei2015submodularity} introduce several submodular objectives, e.g., facility location, and use them to diversify a set of uncertain examples in each active learning iteration.
\citet{citovsky2021batch} cluster examples represented by embeddings extracted from the penultimate layer of a partially trained DNN, and use these clusters to diversify uncertain examples in each iteration.
Our work differs from these and several others (e.g., \citet{kirsch2019batchbald,zhdanov2019diverse}) in that we directly incorporate the utility and diversity terms into the objective function.

\section{Preliminaries}
\label{sec:preliminaries}

\paragraph{Submodular function.}
For any $g : 2^V \rightarrow \R_{\ge 0}$ and $S, T \subseteq V$,
let $g(S \mid T) = g(S \cup T) - g(T)$ be the \emph{marginal gain} of $g$ at $S$ with respect to $T$.
A function $g$ is submodular if for every $S \subseteq T \subseteq V$ and $v \in V \setminus T$,
we have $g(v \mid S) \ge g(v \mid T)$,
where we overload the marginal gain notation for singletons.
A submodular function $g$ is \emph{monotone} if for every $S \subseteq T \subseteq V$, $g(S) \le g(T)$.

\paragraph{Max-min diversity.}
For a set of $n$ points $V$ in a metric space, define the \emph{max-min diversity} function as
\begin{align*}
\label{eqn:div_def}
    \div(S) &= \begin{cases}
        \min_{u, v \in S : u \ne v} \dist(u, v) & \text{if } |S| \geq 2, \\
        \max_{u, v \in V} \dist(u, v) & \text{if } |S| \leq 1.
    \end{cases}
\end{align*}
We take $\div(S)$ to be the diameter of $V$ if $|S| \le 1$ so that it is monotone decreasing.
We extend the distance function to take subsets $S \subseteq V$ as input in the standard way, i.e.,
$\dist(u, S) = \min_{v \in S}\dist(u,v)$,
and define $\dist(u, \emptyset) = \infty$.

\paragraph{MDMS problem statement.}
For any nonnegative monotone submodular function $g : 2^V \rightarrow \R_{\ge 0}$
and $\lambda \ge 0$, let
\begin{equation}
\label{eqn:objective_function}
    f(S) = g(S) + \lambda \cdot \div(S).
\end{equation}
The \emph{max-min diversification with monotone submodular utility} (\DiverseDataSummarization) problem
is to maximize $f(S)$ subject to a cardinality constraint $k$:
\[
    S^* = \argmax_{S \subseteq V : |S| \le k} f(S).
\]
Let $\OPT = f(S^*)$ denote the optimal objective value.

\vspace{0.3cm}
\begin{remark}
The objective function $f(S)$ is \emph{not submodular} (see \Cref{app:not_submodular} for a counterexample).
\end{remark}

\paragraph{Intersection graph.}
Let $G_{d}(V)$ be the \emph{intersection graph} of $V$ for distance threshold $d$,
i.e., with nodes $V$ and edges
$
    E = \{(u,v) \in V^2: u \ne v, \dist(u,v) < d\}.
$
For any independent set $S$ of $G_d(V)$ and pair of distinct nodes $u,v \in S$, we have $\dist(u, v) \ge d$.

\subsection{Warm-up: Simple $0.387$-approximation algorithm}
\label{subsec:warm-up}
The objective function $f(S)$ is composed of two terms, each of which is easy to (approximately) optimize in isolation.
For the submodular term $g(S)$, run the greedy algorithm to get $S_1$ satisfying
\begin{align*}
    g(S_1)
    &\ge
    (1-\sfrac{1}{e}) \cdot \max_{|S| \le k} g(S) \\
    &\ge
    (1-\sfrac{1}{e}) \cdot g(S^*).
\end{align*}
For the $\div(S)$ term, let $S_2$ be a pair of diametrical points in $V$.
Then, return the better of the two solutions.
This gives a $0.387$-approximation guarantee
because, for any $0 \le p \le 1$, we have:
\begin{align*}
    \ALG_{\textnormal{simple}}
    &=
    \max\set*{f(S_1), f(S_2)} \\
    &\ge
    p \cdot g(S_1) + (1 - p) \cdot \lambda \cdot \div(S_2) \\
    &\ge
    p \cdot (1-\sfrac{1}{e}) \cdot g(S^*) + (1-p) \cdot \lambda \cdot \div(S^*)\\
    &= \frac{e - 1}{2e - 1} \cdot \OPT,
\end{align*}
where we set $p = e / (2e - 1)$ at the end by solving $p \cdot (1-\sfrac{1}{e}) = 1 - p$.
One of our main goals is to improve over this baseline and prove complementary hardness of approximation results.

One of the most common algorithms for subset selection problems is the greedy algorithm that iteratively adds the item with maximum marginal value with respect to the objective function $f$.
We show in Appendix~\ref{sec:greedy-no-approx} that the greedy algorithm does not provide a constant-factor approximation guarantee. 

\section{Algorithm}
\label{sec:algorithm}

In this section, we present the $\GIST$ algorithm for the \DiverseDataSummarization problem
and prove that it achieves an approximation ratio of $\sfrac{1}{2} - \varepsilon$.
At a high level, \GIST works by sweeping over multiple distance thresholds $d$ and calling
a $\GreedyIndependentSet$ subroutine on the intersection graph $G_d(V)$ to find a high-valued maximal independent set.
This is in contrast to the warm-up $0.387$-approximation algorithm, which only considers the two extreme thresholds $d \in \{0, d_{\max}\}$.

$\GreedyIndependentSet$ builds the set $S$ (starting from the empty set)
by iteratively adding $v \in V \setminus S$ with the highest marginal gain
with respect to the submodular utility $g$
while satisfying $\dist(v, S) \ge d$.
This subroutine runs until either $|S| = k$ or $S$ is a maximal independent set of $G_{d}(V)$.

$\GIST$ computes multiple solutions and returns the one with maximum value $f(S)$.
It first runs the classic greedy algorithm for monotone submodular functions to get an initial solution,
which is equivalent to calling $\GreedyIndependentSet$ with distance threshold $d=0$. 
Then, it considers the set of distance thresholds
$\DT \gets \{(1+\varepsilon)^i \cdot \varepsilon d_{\max} / 2 : (1 + \varepsilon)^i \le 2 / \varepsilon \mbox{ and } i \in \NaturalNumbers\}$.
The set $D$ contains a threshold close to the target $d^* / 2$ where $d^* = \div(S^*)$.\footnote{If this is not true, we are in the special case where $d^* \le \varepsilon d_{\max}$, which we analyze separately.}
For each $d \in \DT$, \GIST calls $\GreedyIndependentSet(V, g, d, k)$ to find a set $T$ of size at most $k$.
If $T$ has a larger objective value than the best solution so far, it updates $S \gets T$.
After iterating over all thresholds in $\DT$, it returns the highest-value solution among all candidates.

\begin{algorithm*}
\caption{Max-min diversification with submodular utility via greedy weighted independent sets.}
\label{alg:gist}
\begin{algorithmic}[1]
    \Function{\GIST}{points $V$, monotone submodular function $g : 2^V \rightarrow \R_{\ge 0}$, budget $k$, error $\varepsilon$}
    \State Initialize $S \gets \GreedyIndependentSet(V, g, 0, k)$
      \textcolor{Navy}{\Comment{\texttt{classic greedy algorithm}}}
    \State Let $d_{\max} = \max_{u,v \in V} \dist(u, v)$ be the diameter of $V$ \label{line:diamater_1}
    \State Let $T \gets \{u, v\}$ be two points such that $\dist(u, v) = d_{\max}$ \label{line:diamater_2}
    \If{$f(T) > f(S)$ and $k \ge 2$}
        \State Update $S \gets T$
    \EndIf
    \State Let $\DT \gets \{(1+\varepsilon)^i \cdot \varepsilon d_{\max} / 2 : (1 + \varepsilon)^i \le 2 / \varepsilon \mbox{ and } i \in \NaturalNumbers\}$
    \textcolor{Navy}{\Comment{\texttt{distance thresholds}}}
    \For{threshold $d \in D$}
        \State Set $T \gets \GreedyIndependentSet(V, g, d, k)$
        \If{$f(T) \ge f(S)$} 
            \State Update $S \gets T$
        \EndIf
    \EndFor
    \Return $S$
    \EndFunction
\end{algorithmic}

\begin{algorithmic}[1]
    \Function{\GreedyIndependentSet}{points $V$, monotone submodular function $g : 2^V \rightarrow \R_{\ge 0}$,
    distance $d$, budget $k$}
    \State Initialize $S \gets \emptyset$
    \For{$i=1$ to $k$}
        \State Let $C \gets \{v \in V \setminus S : \dist(v, S) \ge d\}$
        \If{$C = \emptyset$}
            \State \Return $S$       \textcolor{Navy}{\Comment{\texttt{$S$ is a maximal independent set of $G_d(V)$}}}
        \EndIf
        \State Find $t \gets \argmax_{v \in C} g(v \mid S)$
        \State Update $S \gets S \cup \{t\}$
    \EndFor
    \Return $S$
    \EndFunction
\end{algorithmic}
\end{algorithm*}

\begin{theorem}
\label{thm:submod_main_approx_theorem}
For any $\varepsilon > 0$, \GIST outputs a set $S \subseteq V$ with $|S| \le k$ and
$f(S) \ge (\nicefrac{1}{2} - \varepsilon) \cdot \OPT$
using $O(nk \log_{1+\varepsilon}(1/\varepsilon))$ submodular value oracle queries. 
\end{theorem}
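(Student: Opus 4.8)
The plan is to split the analysis into two cases depending on the size of $d^* = \div(S^*)$, and in the main case to compare the output of $\GreedyIndependentSet(V, g, d, k)$ at a well-chosen threshold $d$ against $S^*$. First I would handle the easy case $d^* \le \varepsilon d_{\max}$: here the diversity contribution to $\OPT$ is tiny, namely $\lambda \cdot d^* \le \lambda \varepsilon d_{\max} = \varepsilon \cdot \lambda \cdot \div(\emptyset) \le \varepsilon \cdot \OPT$ (using $\div(\emptyset) = d_{\max}$ and monotonicity of $\div$), so $g(S^*) \ge (1-\varepsilon)\OPT$. The classic greedy solution $S_1 = \GreedyIndependentSet(V,g,0,k)$ then satisfies $f(S_1) \ge g(S_1) \ge (1-\sfrac{1}{e}) g(S^*) \ge (1-\sfrac{1}{e})(1-\varepsilon)\OPT$, which exceeds $(\sfrac12 - \varepsilon)\OPT$ for small $\varepsilon$.

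For the main case $d^* > \varepsilon d_{\max}$, I would fix the threshold $d \in \DT$ with $d \le d^*/2 < (1+\varepsilon)d$; such a $d$ exists in $\DT$ because $\varepsilon d_{\max}/2 \le d^*/2 \le d_{\max}/2$ and $\DT$ is a geometric $(1+\varepsilon)$-grid covering $[\varepsilon d_{\max}/2,\, d_{\max}/2]$. Let $T$ be the set returned by $\GreedyIndependentSet(V, g, d, k)$. The key structural observation is that $T$ is an independent set of $G_d(V)$, so $\div(T) \ge d \ge \frac{1}{1+\varepsilon}\cdot\frac{d^*}{2}$, giving $\lambda \cdot \div(T) \ge \frac{1}{2(1+\varepsilon)} \lambda \cdot \div(S^*)$. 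For the submodular term I would argue $g(T) \ge \frac12 g(S^*)$ (in fact I expect $\ge (1-\sfrac1e)$ when $|T| < k$ terminates at maximality, but $\sfrac12$ is what survives the cardinality-truncated case). The natural argument: since $\dist(u,v) \ge d^* \ge 2d$ for distinct $u,v \in S^*$, the closed $d$-balls around points of $S^*$ are pairwise disjoint, and each point of $T$ (being within distance... wait — rather, maximality of $T$ means every point of $V\setminus T$, in particular every point of $S^*$ not "blocked," is within $d$ of $T$). I would charge each $v \in S^*$ to a nearby $t \in T$ with $\dist(v,t) < d$; the ball-disjointness of $S^*$ forces this charging to be injective, so $|T| \ge \min(k, |S^*|)$ and moreover the greedy choices dominate. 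Combining the two bounds, $f(T) \ge \frac12 g(S^*) + \frac{1}{2(1+\varepsilon)}\lambda\cdot\div(S^*) \ge \frac{1}{2(1+\varepsilon)}\OPT \ge (\sfrac12-\varepsilon)\OPT$ after reparametrizing $\varepsilon$.

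The main obstacle is making the bound $g(T) \ge \frac12 g(S^*)$ rigorous when $\GreedyIndependentSet$ stops early at $|T| = k$ rather than at a maximal independent set, and simultaneously when it stops at maximality with $|T| < k$ — the greedy-on-a-matroid-intersection-flavored argument needs the injective charging $S^* \to T$ together with submodularity to show that the greedy marginal gains collectively recover at least half of $g(S^*)$. I would prove this via the standard exchange argument: order $T = \{t_1,\dots,t_m\}$ by insertion; for the injective map $\phi: S^* \to T$ built from the $d$-ball charging, at the step where $t_i$ was chosen, $\phi^{-1}(t_i)$ (if nonempty, and also any not-yet-charged $S^*$ element whose ball is still "open") was a candidate in $C$, so $g(t_i \mid S_{i-1}) \ge g(\phi^{-1}(t_i) \mid S_{i-1}) \ge g(\phi^{-1}(t_i) \mid S^* \cup T)$; telescoping and submodularity then yield $g(T) \ge g(S^* \cup T) - g(T) \ge g(S^*) - g(T)$, i.e. $g(T) \ge \frac12 g(S^*)$. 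Finally I would tally the oracle queries: one greedy run costs $O(nk)$ queries, and $\GIST$ performs $1 + |\DT| = O(\log_{1+\varepsilon}(1/\varepsilon))$ such runs, for a total of $O(nk\log_{1+\varepsilon}(1/\varepsilon))$.
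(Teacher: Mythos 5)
Your proposal is correct and follows essentially the same route as the paper: a geometric sweep of distance thresholds, the key bicriteria bound $g(T) \ge g(S^*)/2$ proved via an injective ball-charging map from $S^*$ into $T$ combined with the greedy exchange and submodularity (this is exactly the paper's Lemma~\ref{lem:submod_k_centers_guarantee}), plus a separate easy case when $d^* \le \varepsilon d_{\max}$. The only cosmetic difference is that you dispatch the easy case directly via $\lambda d^* \le \varepsilon \lambda d_{\max} \le \varepsilon \cdot \OPT$, whereas the paper combines the classic-greedy bound with the diametrical-pair bound; both yield the claim.
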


The main building block in our design and analysis of \GIST is the following lemma,
which is inspired by the greedy $2$-approximation algorithm for metric $k$-center and adapted for monotone submodular utility functions.

\begin{lemma}
\label{lem:submod_k_centers_guarantee}
Let $S_{d}^*$ be a maximum-value set of size at most $k$ with diversity at least $d$.
In other words,
\[
    S_d^* = \argmax_{S : |S| \le k, \div(S) \ge d} g(S).
\]
Let $T$ be the output of $\GreedyIndependentSet(V,g,d',k)$.
If $d' < d/2$, then $g(T) \ge  g(S_d^*)/2$.
\end{lemma}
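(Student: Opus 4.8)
The plan is to adapt the charging argument behind the classical greedy $2$-approximation for metric $k$-center, replacing ``cover as many clusters as possible'' by ``capture as much submodular marginal value as possible'' and using the threshold gap $d' < d/2$ to make the relevant assignment injective. Write $T = (t_1, \dots, t_m)$ in the order the subroutine adds the points, so $m \le k$, and let $S_i = \{t_1,\dots,t_i\}$ be the prefix after $i$ steps, with $S_0 = \emptyset$ and $S_m = T$; if $m < k$ the subroutine stopped with an empty candidate set, so $T$ is a maximal independent set of $G_{d'}(V)$ and every $v \in V \setminus T$ satisfies $\dist(v, T) < d'$. The one geometric fact I will reuse is that, since $\div(S_d^*) \ge d > 2d'$, any point of $V$ lies within distance $d'$ of \emph{at most one} element of $S_d^*$ (two such elements would be within $2d' < d$ of each other, contradicting $\div(S_d^*) \ge d$).

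The heart of the argument is to build an injection $\tau : S_d^* \setminus T \to [m]$ with $g(t_{\tau(s)} \mid S_{\tau(s)-1}) \ge g(s \mid T)$ for every $s$. Say $s$ is \emph{blocked} if $\dist(s, t_i) < d'$ for some $i$, and then let $\ell(s)$ be the least such $i$. At step $\ell(s)$ the point $s$ is still a candidate (it is at distance $\ge d'$ from every earlier $t_j$), so the greedy rule gives $g(t_{\ell(s)} \mid S_{\ell(s)-1}) \ge g(s \mid S_{\ell(s)-1}) \ge g(s \mid T)$ by submodularity; the geometric fact makes $\ell$ injective on blocked points and forbids $\ell(s)$ from being an index $i$ with $t_i \in S_d^*$. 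For the non-blocked points $B$ --- empty when $m < k$ by maximality, hence only relevant when $m = k$ --- each $s \in B$ has $\dist(s, T) \ge d'$ and so is a candidate at \emph{all} $m$ steps, giving $g(t_i \mid S_{i-1}) \ge g(s \mid T)$ for every $i$; it therefore suffices to send the points of $B$ to distinct indices not used by $\ell$. A count does this: $\ell$ uses $|S_d^* \setminus T| - |B|$ indices, so at least $m - |S_d^*\setminus T| + |B| \ge m - |S_d^*| + |B| \ge |B|$ indices of $[m]$ remain free, using $|S_d^*| \le k = m$ in this case. Extend $\tau$ over $B$ with these free indices.

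With $\tau$ in hand the lemma follows quickly. Monotonicity makes every marginal $g(t_i \mid S_{i-1})$ nonnegative, so injectivity of $\tau$ yields $\sum_{s \in S_d^* \setminus T} g(s \mid T) \le \sum_{s \in S_d^* \setminus T} g(t_{\tau(s)} \mid S_{\tau(s)-1}) \le \sum_{i=1}^m g(t_i \mid S_{i-1}) = g(T)$. Then submodularity gives $g(S_d^*) \le g(S_d^* \cup T) = g(T) + g(S_d^* \setminus T \mid T) \le g(T) + \sum_{s \in S_d^* \setminus T} g(s \mid T) \le 2 g(T)$, i.e.\ $g(T) \ge g(S_d^*)/2$.

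I expect the main obstacle to be exactly the case $m = k$, where $T$ need not be a maximal independent set and an optimal point can remain a candidate for all $k$ steps: then the first-block-step map alone no longer covers $S_d^* \setminus T$, so one must run the cardinality count above to show enough greedy steps remain to absorb those leftover points, check that such ``always-candidate'' steps really dominate the corresponding $g(s \mid T)$, and verify that the two pieces of $\tau$ are jointly injective. The usual one-line $k$-center intuition (``greedy either exhausts all $k$ centers, making the next point far, or returns a maximal set'') hides precisely this case analysis.
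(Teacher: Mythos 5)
Your proof is correct and follows essentially the same charging argument as the paper's: radius-$d'$ balls around the greedy picks contain at most one point of $S_d^*$, each optimal point is charged injectively to a greedy step whose marginal gain dominates $g(s \mid T)$, and the telescoping sum yields $g(S_d^*) - g(T) \le g(T)$. Your explicit counting for the unblocked points when $|T| = k$ simply makes precise the paper's step of mapping uncovered points of $S_d^*$ to arbitrary remaining points of $T$ while preserving injectivity.
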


\begin{proof}
Let $k' = |T|$ and $t_1, t_2, \dots, t_{k'}$ be the points in~$T$ in selection order.
Let $B_i = \{v \in V : \dist(t_i, v) < d'\}$
be the points in $V$ in the radius-$d'$ open ball around~$t_i$.
Since the distance between any pair of points in $B_i$ is at most $2d' < d$,
each set $B_i$ contains at most one point in $S_d^*$.

First we construct an injective map $h : S_d^* \rightarrow T$.
We say that $B_i$ \emph{covers} a point $v \in V$ if $v \in B_i$.
For each covered $s \in S_d^*$, we map it to $t_i$,
where $i$ is the minimum index for which $B_i$ covers~$s$.
If there is an $s \in S_d^*$ not covered by any set $B_i$,
then $\GreedyIndependentSet$ must have selected $k$ points, i.e., $|T| = k$.
We map the uncovered points in $S_d^*$ to arbitrary points in $T$ while preserving the injective property.

Since $g$ is a monotone submodular function, we have
\[
    g(S_d^*) - g(T) \le \sum_{s \in S_d^*} g(s \mid T).
\]
We use $h$ to account for $g(s \mid T)$ in terms of the values we gained in set $T$.
For any $s \in S_d^*$, let $T_s$ be the set of points added to set $T$ in algorithm $\GreedyIndependentSet$ right before the addition of $h(s)$.
Since $\GreedyIndependentSet$ iteratively adds points and never deletes points from $T$, we know $T_s \subseteq T$.
By submodularity, $g(s \mid T) \le g(s \mid T_s)$.
By the greedy nature of the algorithm, we also know that
\[
    g(h(s) \mid T_s) \ge g(s \mid T_s).
\]
We note that the sum of the former terms, $g(h(s) \mid T_s)$, is at most $g(T)$ since $h$ is injective and $g$ is nondecreasing.
Thus, we conclude that
\begin{align*}
    g(S_d^*) - g(T)
    \le
    \sum_{s \in S_d^*} g(s \mid T)
    \le
    \sum_{s \in S_d^*} g(h(s) \mid T_s)
    \le
    g(T),
\end{align*}
which completes the proof.
\end{proof}

Now that we have with this bicriteria approximation, we can analyze the approximation ratio of \GIST.

\begin{proof}[Proof of \Cref{thm:submod_main_approx_theorem}]
We first prove the oracle complexity of \GIST.
There are $1 + \log_{1+\varepsilon}(2/\varepsilon) = O(\log_{1+\varepsilon}(1/\varepsilon))$ thresholds in $\DT$.
For each threshold, we call \GreedyIndependentSet once. In each call for $k$ iterations, we find the maximum marginal-value point by scanning (in the worst case) all points. This requires at most $nk$ oracle calls yielding the overall oracle complexity of the algorithm.

Let $d^*$ be the minimum pairwise distance between points in $S^*$.
The $\GIST$ algorithm iterates over a set of thresholds $\DT$.
The definition of $\DT$ implies that at least one threshold~$d$ is in the interval $[d^*/(2(1+\varepsilon)), d^*/2)$,
unless $d^* \le \varepsilon d_{\max}$.
We deal with this special case later and focus on the case when such a $d$ exists.

Let $T$ be the output of $\GreedyIndependentSet$ for threshold value $d$.
Since $d < d^* / 2$, \Cref{lem:submod_k_centers_guarantee} implies that
\[
    g(T) \ge \frac{1}{2} \cdot g(S_{d^*}^*) \ge \frac{1}{2} \cdot g(S^*).
\]
We also know from our choice of $d$ that
\[
    \div(T)
    \ge
    \frac{1}{2(1+\varepsilon)} \cdot \div(S^*).
\]
Combining these inequalities gives us
\[
    f(T) \geq \frac{1}{2(1+\varepsilon)} \cdot \OPT > \left(\frac{1}{2} - \varepsilon \right) \cdot \OPT.
\]

It remains to prove the claim for the case when $d^* \le \varepsilon d_{\max}$.
$\GIST$ considers two initial feasible sets and picks the better of the two as the initial value for $T$. 
The first set is the classic greedy solution~\citep{Nemhauser_MP1978} for the monotone submodular function $g(S)$,
and ignores the diversity term.
It follows that
\begin{align}
\label{eqn:special_case_lower_bound_1}
    f(T)
    \ge \left(1 - \frac{1}{e} \right) \cdot g(S^*)
    \ge
    \parens*{1 - \frac{1}{e}} \cdot \parens*{\OPT - \lambda \cdot \varepsilon d_{\max}}.
\end{align}
The second set contains two points with pairwise distance $d_{\max}$, and ignores the submodular term.
This yields the lower bound
\begin{align}
\label{eqn:special_case_lower_bound_2}
    f(T) \geq \lambda \cdot d_{\max}
    \implies
    -\lambda \cdot d_{\max} \ge -f(T).
\end{align}
Combining \eqref{eqn:special_case_lower_bound_1} and \eqref{eqn:special_case_lower_bound_2}, we get a final bound of
$
    f(T) \geq \parens{1 - \frac{1}{e}} \cdot \frac{\OPT}{1 + \varepsilon},
$
which completes the proof.
\end{proof}

\subsection{Linear utility}
\label{subsec:linear_utility_algorithm}

If $g(S) = \sum_{v \in S} w(v)$ is a linear utility with nonnegative weights $w : V \rightarrow \R_{\ge 0}$, then
\Cref{thm:submod_main_approx_theorem} gives us a $(\sfrac{1}{2}-\varepsilon)$-approximation since this is a special case of submodularity.
However, \GIST offers a \emph{stronger approximation ratio} under this assumption.

\begin{restatable}{theorem}{LinearApproximationRatio}
\label{thm:linear_approximation_ratio}
Let $g(S) = \sum_{v \in S} w(v)$ be a linear function.
For any $\varepsilon > 0$,
\GIST returns $S \subseteq V$ with $|S| \le k$ and
$f(S) \ge (\sfrac{2}{3} - \varepsilon) \cdot \OPT$.
\end{restatable}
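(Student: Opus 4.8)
The plan is to re-run the analysis of \GIST but exploit linearity to get a better bound on $g(T)$ than the $\sfrac{1}{2}$ factor from \Cref{lem:submod_k_centers_guarantee}. As before, let $d^* = \div(S^*)$ and first handle the main case where the threshold grid $\DT$ contains some $d \in [d^*/(2(1+\varepsilon)), d^*/2)$; the special case $d^* \le \varepsilon d_{\max}$ will be handled separately (as in the proof of \Cref{thm:submod_main_approx_theorem}) using the classic greedy and diametrical-pair initial solutions, which already give at least $(1-\sfrac1e)\OPT/(1+\varepsilon)$. For the main case, let $T$ be the output of $\GreedyIndependentSet(V,g,d,k)$. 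The diversity bound $\div(T) \ge \div(S^*)/(2(1+\varepsilon))$ holds exactly as before. What changes is the utility comparison: I want to show that in the linear case, \GIST produces some candidate set whose weight is at least $\sfrac{2}{3}$ of $g(S^*)$ — or, more precisely, that combining the utility and diversity guarantees across the candidate solutions yields $f(S) \ge (\sfrac23 - \varepsilon)\OPT$.

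The key new ingredient is a refined version of \Cref{lem:submod_k_centers_guarantee} for linear $g$. Reusing the injective map $h : S_{d^*}^* \to T$ constructed from the radius-$d'$ balls $B_i$ around the greedily selected points $t_i$ (with $d' = d < d^*/2$, so each $B_i$ contains at most one point of $S_{d^*}^*$), the greedy choice guarantees that when $t_i$ was selected it had the maximum weight among all points in $V \setminus \{t_1,\dots,t_{i-1}\}$ whose distance to the current set is at least $d'$. In particular, any $s \in S_{d^*}^*$ mapped to $t_i$ was a feasible candidate at that step (it lies in $B_i$, hence was not yet excluded by a distance-$d'$ violation with an earlier $t_j$ — one must check this carefully), so $w(t_i) \ge w(s)$. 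This already re-proves $g(T) \ge g(S_{d^*}^*)/2$ since $h$ is injective. To push to $\sfrac23$, I would argue about the two candidate solutions \GIST considers: the pure-greedy set $S_1 = \GreedyIndependentSet(V,g,0,k)$, which for linear $g$ just picks the $k$ heaviest points and hence satisfies $g(S_1) \ge g(S^*)$; and the set $T$ above. Then for any $p \in [0,1]$, $\ALG \ge \max\{f(S_1), f(T)\} \ge p\,g(S_1) + (1-p)\bigl(g(T) + \lambda\div(T)\bigr) \ge p\,g(S^*) + (1-p)\bigl(\tfrac12 g(S^*) + \tfrac{\lambda}{2(1+\varepsilon)}\div(S^*)\bigr)$, and optimizing the split between the $g(S^*)$ contributions versus dragging along the diversity term should yield the $\sfrac23$ constant — roughly, balancing $p + (1-p)/2$ against $(1-p)/2$ relative to the $\OPT = g(S^*) + \lambda\div(S^*)$ normalization gives $p = \sfrac13$ and ratio $\sfrac23$, up to the $(1+\varepsilon)$ loss.

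Let me sanity-check the constant with the normalization. Write $a = g(S^*)$ and $b = \lambda\div(S^*)$, so $\OPT = a+b$. We have $f(S_1) \ge a$ and $f(T) \ge \tfrac12 a + \tfrac{1}{2(1+\varepsilon)} b$. Taking the convex combination $p f(S_1) + (1-p) f(T) \ge p a + (1-p)\bigl(\tfrac12 a + \tfrac{1}{2(1+\varepsilon)}b\bigr) = \bigl(p + \tfrac{1-p}{2}\bigr) a + \tfrac{1-p}{2(1+\varepsilon)} b = \tfrac{1+p}{2} a + \tfrac{1-p}{2(1+\varepsilon)} b$. Ignoring $\varepsilon$ for the moment, to get a multiple $c$ of $a+b$ we need $\tfrac{1+p}{2} \ge c$ and $\tfrac{1-p}{2} \ge c$; the latter is the binding one and forces $c \le \tfrac{1-p}{2}$, which is maximized at $p=0$ giving only $\sfrac12$. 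So the naive two-solution combination is \textbf{not} enough — this is the main obstacle. The fix must use the diversity of $S_1$ as well, or more likely a sharper structural fact: for linear utilities, when the greedy independent set $T$ for threshold $d$ ``stops early'' as a maximal independent set (so $|T| < k$), every point of $S_{d^*}^*$ is covered and mapped injectively, and moreover $T$ itself has diversity $\ge d' \ge$ (something comparable to $d^*/2$) while $g(T) \ge g(S_{d^*}^*)$ in fact — because each $s$ maps to a $t_i$ of weight $\ge w(s)$ and the map is injective but possibly non-surjective, so $g(T) \ge \sum_{s} w(h^{-1}\text{-preimages})$... no, that still only gives $g(T) \ge g(S_{d^*}^*)$ when $h$ is a bijection onto a subset. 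Hmm. The correct $\sfrac23$ argument (analogous to \citet{borodin2017max}) is almost certainly: run greedy for threshold $\approx d^*/2$; if it fills to size $k$, the weight argument gives $g(T) \ge g(S^*)$ directly (every one of the $k$ greedily chosen points beats a distinct point of $S^*$, using $|S^*|\le k = |T|$), hence $f(T) \ge g(S^*) + \tfrac{\lambda}{2(1+\varepsilon)}\div(S^*) \ge \tfrac{1}{2(1+\varepsilon)}\OPT$ — not obviously $\sfrac23$ either, unless $\div(S^*)$ is small relative to $g(S^*)$. So the real plan is a \emph{case split on the magnitude of $b=\lambda\div(S^*)$ relative to $a=g(S^*)$}: when $b$ is large we take the diametrical pair (value $\lambda d_{\max} \ge \lambda \div(S^*) = b \ge \tfrac23\OPT$ once $b \ge 2a$... check: $b \ge 2a \Rightarrow b \ge \tfrac23(a+b)$, yes); when $b$ is small we take the greedy set $S_1$ with $f(S_1) \ge a = \OPT - b \ge \OPT - \tfrac13\OPT$ once $b \le \tfrac13\OPT$ i.e. $b \le \tfrac12 a$; and the middle regime $\tfrac12 a \le b \le 2a$ is covered by $T$ with $f(T) \ge \tfrac12 a + \tfrac{1}{2(1+\varepsilon)} b$, which at the worst point of that regime ($a = 2b$, giving $f(T) \gtrsim b + \tfrac12 b = \tfrac32 b$ vs $\OPT = 3b$, ratio $\sfrac12$) — \textbf{still $\sfrac12$}.

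So the genuinely hard part, and where I would spend the most effort, is proving a linear-utility strengthening of \Cref{lem:submod_k_centers_guarantee} that beats $\sfrac12$: something like $g(T) \ge \tfrac23 g(S_{d^*}^*)$, or a combined bound $g(T) + \tfrac12\lambda\div(T) \ge \tfrac23(g(S_{d^*}^*) + \lambda d^*)$, by a more careful charging that accounts for the ``slack'' points of $T$ not in the image of $h$ together with the fact that uncovered points of $S^*$ only occur when $|T|=k$. I expect the right statement is: either $|T| = k$ and then $g(T) \ge g(S^*)$ (heaviest-$k$ beats any $k$-subset), giving $f(T)\ge g(S^*) + \tfrac{\lambda}{2(1+\varepsilon)}\div(S^*) \ge \tfrac{2}{3}\OPT$ whenever $\div(S^*)$ contributes at most... — and the complementary case $|T| < k$ forces $T$ maximal, so $h$ is a bijection $S^*\cong$ (subset of $T$) and additionally $\div(T) \ge d \ge d^*/(2(1+\varepsilon))$; combined with $f(S_1) \ge g(S^*)$ and picking the better of $S_1, T$ and the diametrical pair, the arithmetic should close to $\sfrac23 - \varepsilon$. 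I would carry out the steps in this order: (1) oracle-complexity count (identical to \Cref{thm:submod_main_approx_theorem}); (2) the special case $d^* \le \varepsilon d_{\max}$ (identical); (3) fix the threshold $d \in [d^*/(2(1+\varepsilon)), d^*/2)$ and the set $T$; (4) prove the linear strengthening of the greedy-independent-set lemma, splitting on whether $|T| = k$; (5) assemble the final bound by taking the maximum over $S_1$, the diametrical pair, and $T$, optimizing the convex-combination weights to extract the $\sfrac23$. Step (4) is the crux and the place where linearity (as opposed to submodularity) is essential — the weight-domination $w(t_i) \ge w(s)$ for the matched pair, which fails for general submodular marginal gains because $g(t_i \mid T_{t_i}) \ge g(s \mid T_{t_i})$ does not let us ``double-count'' the unmatched points of $T$.
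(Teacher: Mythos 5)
You have correctly located the crux --- a linear-utility strengthening of \Cref{lem:submod_k_centers_guarantee} based on weight domination $w(t_i) \ge w(s)$ rather than marginal-gain domination --- but the proposal does not close, and it goes wrong in two concrete places. First, the strengthened lemma that is actually needed (\Cref{lem:k_centers_guarantee_linear} in \Cref{app:linear_approximation_ratio}) says that $\GreedyIndependentSet$ at threshold $d' \le d/2$ returns $T$ with $g(T) \ge g(S_d^*)$ --- the \emph{full} optimal weight, no factor of $\sfrac{1}{2}$ --- uniformly in both the $|T|=k$ and $|T|<k$ cases. The charging: each first-covering set $C_i = B_i \setminus (B_1 \cup \dots \cup B_{i-1})$ contains at most one $s \in S_d^*$, and such an $s$ was still feasible at step $i$ (being uncovered, it is at distance $\ge d'$ from $t_1,\dots,t_{i-1}$), so $w(t_i) \ge w(s)$; any $s \in S_d^*$ left uncovered forces $|T|=k$ and remains feasible at every step, so it can be charged to one of the leftover indices $i$ with $C_i \cap S_d^* = \emptyset$, of which there are enough since $|S_d^*| \le k$. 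Your justification for the $|T|=k$ case (``heaviest-$k$ beats any $k$-subset'') is not valid --- $T$ is constrained by the distance threshold and is generally not the $k$ heaviest points --- and in the $|T|<k$ case you derive exactly the needed bound $g(T) \ge g(S_{d^*}^*)$ and then dismiss it as insufficient.

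Second, the assembly. The only convex combination you compute explicitly pairs $f(S_1) \ge g(S^*)$ with the weak bound $f(T) \ge \tfrac12 g(S^*) + \tfrac{1}{2(1+\varepsilon)}\lambda\,\div(S^*)$, and you correctly observe it caps out at $\sfrac12$; your case split on $\lambda\,\div(S^*)$ versus $g(S^*)$ also stalls at $\sfrac12$. With the full-weight lemma the correct pairing is $f(T) \ge g(S^*) + \tfrac{\lambda d^*}{2(1+\varepsilon)}$ against the diametrical pair $\ALG \ge \lambda d_{\max} \ge \lambda d^*$, with weight $p=\sfrac23$:
\[
  \ALG \;\ge\; \tfrac{2}{3}\Bigl(g(S^*) + \tfrac{\lambda d^*}{2(1+\varepsilon)}\Bigr) + \tfrac{1}{3}\,\lambda d^* \;=\; \tfrac{2}{3}\,g(S^*) + \tfrac{1}{3}\Bigl(1 + \tfrac{1}{1+\varepsilon}\Bigr)\lambda d^* \;\ge\; \Bigl(\tfrac{2}{3}-\varepsilon\Bigr)\cdot\OPT.
\]
Finally, your treatment of the special case $d^* \le \varepsilon d_{\max}$ is insufficient for this theorem: reusing the submodular analysis only yields $(1-\sfrac{1}{e})\OPT/(1+\varepsilon) < \sfrac23 \cdot \OPT$. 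For linear $g$, Line 2 of \GIST returns the $k$ heaviest points, so $\ALG \ge g(S^*) = \OPT - \lambda d^* \ge (1-\varepsilon)\cdot\OPT$, using $\lambda d^* \le \lambda\varepsilon d_{\max} \le \varepsilon\cdot\OPT$.
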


We defer the proof to \Cref{app:linear_approximation_ratio}, but explain the main differences.
For linear functions, \Cref{lem:submod_k_centers_guarantee} can be strengthened
to show that $\GreedyIndependentSet$ outputs a set $T$ such that $g(T) \ge g(S_{d}^*)$, for any $d' < d/2$.
Then, for some $d \in D$, we get $\ALG \ge \max\set*{g(S^*) + \lambda \cdot d^* / (2(1+\varepsilon)), \lambda \cdot d^*}$,
and the optimal convex combination of these two lower bounds gives the approximation ratio.

\section{Hardness of approximation}
\label{sec:hardness}

We begin by summarizing our hardness results for \DiverseDataSummarization.
Assuming $\text{P} \ne \text{NP}$, we prove that:

\textbf{Submodular utility.}
There is no polynomial-time $0.5584$-approximation algorithm if $g$ is a nonnegative, monotone submodular function.

\textbf{Linear utility.} 
\begin{itemize}
    \item There is no polynomial-time $(\sfrac{2}{3}+\varepsilon)$-approximation algorithm if $g$ is linear, for any $\varepsilon > 0$, for general distance metrics.
    \item $\text{APX}$-completeness for linear utility functions even in the Euclidean metric,
    i.e., there is no \emph{polynomial-time approximation scheme} (PTAS) for this problem.
\end{itemize}

\subsection{General metric spaces}

We build on the set cover hardness instances originally designed by Feige et al.~\citep{feige1998threshold, feige2004approximating, feige2010submodular},
and then further engineered by \citet{kapralov2013online}.
We present the instances in \citep[Section 2]{kapralov2013online} as follows:

\paragraph{Hardness of max $k$-cover.}
For any constants $c, \epsilon > 0$ and a given collection/family of sets $\F$  partitioned into groups $\F_1, \F_2, \cdots, \F_k$, it is $\NP$-hard to distinguish between
the following two cases: 

\begin{itemize}
    \item YES case: there exists $k$ disjoint sets $S_i$, one from each $\F_i$, whose union covers the entire universe. 
    \item NO case: for any $\ell \leq c \cdot k$ sets, their union covers at most a $(1 - (1 - \nicefrac{1}{k})^{\ell} + \epsilon)$-fraction of the universe. 
\end{itemize}

The construction above can be done such that each set $S_i \in \F =  \bigcup_{r=1}^k \F_r$ has the same size. We note that the parameter $k$ should be larger than any constant we set since there is always an exhaustive search algorithm with running time $|\F|^k$ to distinguish between the YES and NO cases.
Therefore, we can assume, e.g., that $k \ge 1/\epsilon$.

We use this max $k$-cover instance to prove hardness of approximation for the \DiverseDataSummarization problem.

\begin{restatable}{theorem}{GeneralHardness}
\label{thm:general_hardness}
It is $\NP$-hard to approximate \DiverseDataSummarization within a factor of $\frac{2(1-1/e)}{2(1-1/e)+1}+\delta < 0.55836 + \delta$,
for any $\delta > 0$.
\end{restatable}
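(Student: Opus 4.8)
The plan is to reduce from the hard max $k$-cover instances of \citet{kapralov2013online} quoted above. Given a family $\F$ partitioned into groups $\F_1, \dots, \F_k$, I would build a point set $V$ for \DiverseDataSummarization as follows. The ground set $V$ consists of one point for each set $S \in \F$, so $|V| = |\F|$. Define a metric on $V$ by declaring two ``set-points'' to be at distance $1$ if they lie in different groups $\F_r, \F_{r'}$ with $r \ne r'$, and at distance $0$ (or some tiny $\gamma > 0$ to keep it a genuine metric, then take $\gamma \to 0$) if they lie in the same group. This is a valid (pseudo-)metric after rescaling: points in the same group form a cluster of diameter $\approx 0$, and inter-cluster distance is exactly $1$, so $d_{\max} = 1$. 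The submodular utility is the coverage function $g(S) = \abs{\bigcup_{S_j \in S} S_j}$ normalized by the universe size, which is monotone submodular and takes values in $[0,1]$. Finally set the cardinality constraint to $k$ and pick $\lambda$ to be the free parameter that balances the two terms.

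With this construction, I would argue the YES/NO gap. In the YES case, there exist $k$ disjoint sets $S_1, \dots, S_k$, one per group, whose union is the whole universe; the corresponding $k$ points have $g = 1$ (full coverage) and, since they come from distinct groups, $\div = 1$. Hence $\OPT \ge g + \lambda \cdot \div = 1 + \lambda$. In the NO case I need an upper bound on $f(S)$ for every feasible $S$ with $\abs{S} \le k$. Split on how many distinct groups $S$ touches. If $S$ touches at least two groups then $\div(S)$ is either $\approx 0$ (two points in the same group) or exactly $1$; if $\div(S) = 1$ then all points of $S$ lie in distinct groups, so $\abs{S} \le k$ automatically and, crucially, $S$ uses at most $k$ sets total, whence by the NO-case covering bound with $\ell = k$ we get $g(S) \le 1 - (1-1/k)^k + \epsilon \to 1 - 1/e + \epsilon$. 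So in the NO case $f(S) \le (1 - 1/e + \epsilon) + \lambda$ when $\div(S)=1$, and $f(S) \le 1 + \lambda \cdot \gamma \approx 1$ when $S$ sits inside a single group or two points coincide. Taking the max, the NO-case optimum is at most $\max\{1,\ 1 - 1/e + \lambda\} + o(1)$.

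The gap ratio is then (NO value)/(YES value) $\le \max\{1,\ 1-1/e+\lambda\}/(1+\lambda)$. To make this as small as possible I would equalize the two branches of the max, i.e.\ choose $\lambda$ so that $1 = 1 - 1/e + \lambda$, giving $\lambda = 1/e$... but checking against the claimed bound $\frac{2(1-1/e)}{2(1-1/e)+1}$ suggests the intended balancing is slightly different — one should instead compare the YES value against the \emph{greedy-type} baseline, setting the contribution of the coverage term in YES to $2(1-1/e)$ worth of ``mass'' relative to a diversity term of $1$. Concretely, rescale so that $g$ contributes up to $2(1-1/e)$ and $\div$ contributes up to $1$: put $\lambda$ and the normalization of $g$ so that YES gives $2(1-1/e) + 1$ while NO gives $2(1-1/e) \cdot \frac{1-1/e}{1} + 1 = \ldots$; I would tune the constants precisely so the ratio lands at $\frac{2(1-1/e)}{2(1-1/e)+1} \approx 0.558$, then absorb the $\epsilon, \gamma, o(1)$ slack into the additive $\delta$. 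A polynomial-time $\bigl(\frac{2(1-1/e)}{2(1-1/e)+1} + \delta\bigr)$-approximation would distinguish YES from NO, contradicting $\text{P} \ne \text{NP}$.

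The main obstacle I anticipate is the precise bookkeeping in the NO case: I must be careful that whenever $\div(S) = 1$ the set $S$ really does consist of points from distinct groups and therefore corresponds to at most $k$ sets of $\F$ (not $\abs{\F_r}$-many), so that the $\ell = k$ covering bound applies cleanly; and I must rule out (or cheaply bound) the mixed configurations where $S$ takes a few points from one group and one from another, which give small $\div$ but possibly large $g$ — these are handled by the ``$f(S) \le 1$'' branch, but one has to confirm that branch is genuinely dominated after choosing $\lambda$. The second delicate point is choosing the normalization constants and $\lambda$ so that the final ratio is exactly $\frac{2(1-1/e)}{2(1-1/e)+1}$ rather than the naive $\frac{1}{1+1/e}$; this is where the factor $2$ (reminiscent of the $\sfrac{1}{2}$-approximation the algorithm achieves, and of the $2d'<d$ slack in \Cref{lem:submod_k_centers_guarantee}) must be inserted, presumably by having the diversity term in the hard instance compete against \emph{half} of an idealized coverage value. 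Everything else — submodularity of $g$, validity of the metric, the $\delta$-absorption of lower-order terms — is routine.
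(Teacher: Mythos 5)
Your starting point --- the Kapralov--Post--Vondr\'ak max $k$-cover instances, one point per set, a two-level metric, and a YES/NO gap --- is the same as the paper's, but two non-cosmetic differences in your construction make the claimed bound unreachable, and your attempt to ``retune the constants'' at the end does not repair them. The first is the metric. You assign the large distance to any pair of sets in \emph{different groups}; the paper assigns the large distance $2d$ only to pairs that are \emph{disjoint} (and in different groups), with intersecting sets at distance $d$ even across groups. Under your metric, a NO-case solution consisting of one set per group attains $\div(S)=1$ while still being subject only to the $\ell=k$ covering bound, so your own NO-case upper bound is $(1-\nicefrac{1}{e}+\epsilon)+\lambda$ against a YES value of $1+\lambda$. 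Since $\frac{1-\nicefrac{1}{e}+\lambda}{1+\lambda}$ is increasing in $\lambda$, the gap you can certify is never below $1-\nicefrac{1}{e}\approx 0.632$ (attained only at $\lambda=0$, i.e.\ plain max-cover), so no tuning of $\lambda$ or rescaling of $g$ reaches $0.558$. Tying the distance $2d$ to \emph{disjointness} is what creates the real dichotomy in the NO case: either accept diversity $d$ and coverage at most $(1-\nicefrac{1}{e}+2\epsilon)U$, or insist on diversity $2d$, in which case the chosen sets are pairwise disjoint of equal size $U/k$, their union is exactly $\ell U/k$, and the NO-case inequality $\ell/k\le 1-(1-\nicefrac{1}{k})^{\ell}+\epsilon$ forces (via the binomial expansion) $\ell/k\le\sqrt{6\epsilon}$, i.e.\ negligible coverage. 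Your proposal has no analogue of this second branch because your metric never encodes disjointness.

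The second difference is the small distance level. You send it to $0$; the paper sets the two levels to $d$ and $2d$ with $d=(1-\nicefrac{1}{e})U$ and $\lambda=1$, so every feasible solution collects a diversity ``floor'' of $(1-\nicefrac{1}{e})U$. That floor is exactly where the factor of $2$ you were hunting for comes from: the low-diversity branch gets about $(1-\nicefrac{1}{e})U$ from coverage \emph{plus} $(1-\nicefrac{1}{e})U$ from the floor, totalling $2(1-\nicefrac{1}{e})U$, against $\OPT=U+2d=(2(1-\nicefrac{1}{e})+1)U$, which is precisely $\frac{2(1-1/e)}{2(1-1/e)+1}$. Your closing paragraph correctly senses that a factor of $2$ must enter but attributes it to the algorithm's bicriteria slack; it actually enters through instance design, by making the minimum pairwise distance comparable to the universe size. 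These two fixes, together with the $\ell\le\sqrt{6\epsilon}\,k$ counting argument for the disjoint branch, are the substance your proposal is missing.
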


\begin{proof}[Proof sketch]
We reduce an instance of the max $k$-cover problem to \DiverseDataSummarization to demonstrate how the inapproximability result translates. For any pair of sets in the collection $\F$, their distance is defined to be~$2$ if and only if they are disjoint; otherwise, it is $1$. Consequently, a selected sub-collection of $\F$ achieves the higher diversity score of $2$ if and only if all chosen sets are mutually disjoint.

Given that all sets in $\F$ are of uniform size and considering the upper bound in the NO case on the union coverage for any collection of sets, a polynomial-time algorithm can only find $O(\epsilon k)$ pairwise disjoint sets. Thus, if an algorithm aims for a diversity score of $2$ (by selecting only disjoint sets), it forgoes most of the potential value from the submodular coverage function in the \DiverseDataSummarization formulation.
On the other hand, if the algorithm settles for the lower diversity score of $1$, it still cannot cover more than an approximate fraction $1 - (1 - \nicefrac{1}{k})^{k} + \epsilon \approx 1 - \nicefrac{1}{e} + \epsilon$ of the universe.

Combining these two upper bounds on any polynomial-time algorithm's performance with the fact that, in the YES case, the optimal solution covers the entire universe and achieves diversity score~$2$, establishes the claimed hardness of approximation.
\end{proof}

\subsection{Linear utility}
\label{subsec:hardness_general_metric_spaces}

For the special case of linear utility functions, we prove a tight hardness result
to complement our $\sfrac{2}{3}$-approximation guarantee in \Cref{thm:linear_approximation_ratio}.

\begin{restatable}{theorem}{TwoOverThreeHardness}
\label{thm:two_over_three_hardness}
For any $\varepsilon > 0$, there is no polynomial-time $(\sfrac{2}{3}+\varepsilon)$-approximation algorithm for the \DiverseDataSummarization problem if $g$ is a linear function,
unless $\textnormal{P} = \textnormal{NP}$. 
\end{restatable}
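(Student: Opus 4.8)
The plan is to reduce from the same max $k$-cover hardness instance used in \Cref{thm:general_hardness}, but replace the submodular coverage utility with a \emph{linear} utility, and choose the diversity scaling $\lambda$ to force the optimum and the algorithmic lower bounds to meet exactly at the ratio $2/3$. Concretely, take the ground set to be the family $\F = \bigcup_{r=1}^k \F_r$ with the metric $\dist(A,B) = 2$ if $A \cap B = \emptyset$ and $\dist(A,B) = 1$ otherwise (this is a valid metric, as in the proof sketch of \Cref{thm:general_hardness}). Each set has uniform size, so assign every element of $\F$ the same weight $w$; then $g(S) = w \cdot |S|$, and subject to $|S| \le k$ the linear term contributes at most $wk$. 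The diversity term contributes $\lambda \cdot 2$ if the chosen sub-collection is pairwise disjoint and $\lambda \cdot 1$ otherwise (assuming $|S| \ge 2$). So the objective takes one of two regimes: a ``disjoint'' solution of value $w|S| + 2\lambda$ with $|S| = O(\epsilon k)$ enforced in the NO case, versus a ``non-disjoint'' solution of value $w|S| + \lambda \le wk + \lambda$.

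The next step is to pin down $\lambda$ (equivalently $wk/\lambda$) so that in the YES case the optimum simultaneously achieves full utility $wk$ (pick the $k$ disjoint covering sets, one per group) and diversity $2$, giving $\OPT = wk + 2\lambda$, while in the NO case every polynomial-time solution is capped by $\max\{ O(\epsilon k)\cdot w + 2\lambda,\ wk + \lambda \}$. Unlike the submodular case, the non-disjoint branch here is capped only by $wk$, not by a $(1-1/e)$ coverage factor — this is precisely why linearity yields $2/3$ rather than $\approx 0.558$. Setting the two NO-case bounds equal (up to the $O(\epsilon)$ slack, which I push into the final $+\varepsilon$) forces $\lambda = wk$, so $\OPT = 3\lambda = 3wk$ and the NO-case value is at most $2\lambda + O(\epsilon)\cdot wk = 2wk + O(\epsilon wk)$, yielding the ratio $2/3 + O(\epsilon)$. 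Rescaling $\epsilon$ relative to the target $\varepsilon$ and invoking $k \ge 1/\epsilon$ to absorb additive lower-order terms completes the gap.

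I would then argue the two cases cleanly: in the YES case exhibit the explicit feasible set of value $3wk$; in the NO case, case-split on whether the returned set $S$ is pairwise disjoint. If it is, then $S$ is a collection of $\ell = |S| \le k$ pairwise disjoint sets, which in particular covers at most a $(1-(1-1/k)^\ell+\epsilon)$-fraction — but more simply, since disjoint sets of uniform size and the NO-case coverage bound force $\ell = O(\epsilon k)$, we get $f(S) \le O(\epsilon k) w + 2\lambda$. If $S$ is not pairwise disjoint, $\div(S) \le 1$ so $f(S) \le wk + \lambda = 2wk$. Taking the max and dividing by $\OPT = 3wk$ gives the bound. Finally I would note the reduction is polynomial and the two cases are NP-hard to distinguish, so a $(2/3+\varepsilon)$-approximation would decide them, contradicting $\textnormal{P} \ne \textnormal{NP}$.

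The main obstacle I anticipate is handling the degenerate small-cardinality and the disjointness edge cases in the NO branch carefully: one must ensure that no polynomial-time solution can secretly combine ``many disjoint sets'' with ``high utility'' — i.e., that the $O(\epsilon k)$ bound on the size of any disjoint sub-collection really does follow from the NO-case coverage guarantee with uniform set sizes (a disjoint family of $\ell$ sets covers exactly an $\ell \cdot s / |U|$ fraction, where $s$ is the common set size, and this must be reconciled with $1-(1-1/k)^\ell+\epsilon$). Getting the constants to line up so that the slack is genuinely $O(\epsilon)$ and can be folded into an arbitrary $\varepsilon$ is the delicate bookkeeping; everything else is a direct adaptation of the \Cref{thm:general_hardness} construction with the coverage function swapped for a weighted cardinality function.
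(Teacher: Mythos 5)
Your proposal is sound in structure but takes a genuinely different route from the paper. The paper does \emph{not} reuse the max $k$-cover machinery of \Cref{thm:general_hardness}; instead it reduces from the max clique inapproximability of H{\aa}stad and Zuckerman: vertices become points, distance $2$ iff an edge is present and $1$ otherwise, weights $w(v)=\alpha/k$ with $\lambda=1-\alpha$ and $\alpha=1/(2-\delta)$, so the YES case gives $\OPT=2-\alpha$ while the NO case caps every solution at $\max\{1,\,2-(2-\delta)\alpha\}=1$, yielding an exact gap of $(2-\delta)/(3-2\delta)\to\sfrac{2}{3}$. Your reduction from the Feige-style max $k$-cover instance with a uniform linear weight works too, and your identification of the key structural point is correct: with linear utility the non-disjoint branch is capped only by $wk$ rather than by a $(1-\sfrac{1}{e})$ coverage factor, which is exactly why the threshold moves from $\approx 0.558$ to $\sfrac{2}{3}$. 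Two caveats on your version. First, the bound you assert on the size of a pairwise-disjoint sub-collection in the NO case is quantitatively off: the coverage property $\ell/k\le 1-(1-1/k)^{\ell}+\epsilon$ together with uniform set sizes forces $\ell\le\sqrt{6\epsilon}\,k$ (this is precisely the binomial-expansion computation in the paper's proof of \Cref{thm:general_hardness}, case (b)), not $O(\epsilon k)$; this does not break the argument since $\sqrt{\epsilon}\to 0$, but the final slack is $O(\sqrt{\epsilon})$ and $\epsilon$ must be taken on the order of $\varepsilon^{2}$. Second, your soundness argument silently relies on every set having size exactly $U/k$ in NO instances as well as YES instances (otherwise a large disjoint family could evade the coverage bound while still collecting full linear utility); this is a property of the Feige construction that the paper also invokes, so you are on the same footing, but it should be stated. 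The trade-off between the two approaches: the paper's clique reduction has a trivial NO-case analysis and an exact closed-form gap, at the cost of importing the strong $n^{1-\theta}$ clique hardness; yours is economical in reusing the \Cref{thm:general_hardness} instance but inherits its bookkeeping and its structural assumptions on the set system.
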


We defer the proof to \Cref{app:two_over_three_hardness}.
Our construction builds on the work of \citet{HastadFOCS96} and \citet{zuckerman2006linear},
which shows that the max clique problem does not admit an efficient $n^{1 -\theta}$-approximation algorithm, for any constant $\theta > 0$.

\subsection{Euclidean metric}
\label{subsec:hardness_euclidean_metric}

Our final result is for the Euclidean metric, i.e., $S \subseteq \R^d$ and $\dist(\mat{x},\mat{y}) = \norm{\mat{x} - \mat{y}}_{2}$.
We build on a result of \citet{alimonti2000some} showing that the size of a max independent set in a bounded-degree graph
cannot be approximated to within a constant factor $1-\varepsilon$,
for any $\varepsilon > 0$,
unless $\text{P} = \text{NP}$.

\begin{lemma}[{\citet[Theorem 3.2]{alimonti2000some}}]
\label{lem:bounded_degree_mis_hardness}
The maximum independent set problem for graphs with
degree at most $3$ is \textnormal{APX}-complete.
\end{lemma}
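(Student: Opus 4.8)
The plan is to recognize up front that this lemma is quoted verbatim as Theorem~3.2 of \citet{alimonti2000some}, so the natural "proof" is to cite it and use it as a black box in the Euclidean reduction that follows. For completeness, I would sketch the underlying argument, which has the two standard halves of any APX-completeness claim: membership in APX and APX-hardness.

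For membership in APX I would run the greedy maximal-independent-set procedure on the input graph $H$ with $\Delta(H) \le 3$: repeatedly pick any remaining vertex, add it to the solution, and delete it together with its (at most three) neighbors. Each round removes at most four vertices while growing the solution by one, so the returned independent set $M$ satisfies $|M| \ge |V(H)|/4 \ge \OPT/4$ (equivalently, every optimal vertex is either in $M$ or adjacent to $M$, giving $\OPT \le 4|M|$). Hence this is a $4$-approximation; a better constant follows from local search, but any fixed constant suffices to place the problem in APX.

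For APX-hardness I would give an L-reduction from a problem already known to be APX-complete --- for instance maximum independent set on graphs of some fixed bounded degree $B$, which is APX-hard by the PCP theorem together with the expander-based construction of Papadimitriou--Yannakakis, or from MAX-3SAT with a bounded number of occurrences per variable. The workhorse is a degree-reduction gadget: replace each vertex $v$ of degree $d > 3$ by a simple cycle (or path) on $2d$ vertices and route the $d$ original edges to alternate gadget vertices, chosen so that (i) the gadget's own maximum independent set has a fixed size, (ii) the cost of putting the gadget into its ``$v$ selected'' versus ``$v$ not selected'' configuration is controlled, and (iii) any optimal solution can be locally repaired so that every gadget sits in one of these two configurations without loss of value. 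One then checks that the optimum of the new degree-$3$ instance is within a constant factor of the original optimum, and that a solution of value $(1-\delta)\OPT$ in the new instance lifts to one of value $(1-c\delta)\OPT$ in the original for a constant $c$; these are exactly the two L-reduction conditions, so APX-hardness transfers.

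The main obstacle is the bookkeeping inside the L-reduction: one must pin down explicit constants in the linear relation between the two optima and in the solution-lifting inequality (both depending only on $B$ and the chosen gadget), and verify that the gadget creates neither new degree violations nor spurious large independent sets that would break the correspondence. The APX-membership half and the high-level reduction skeleton are routine.
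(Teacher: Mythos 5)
The paper gives no proof of this lemma at all---it is imported verbatim as Theorem~3.2 of \citet{alimonti2000some} and used as a black box in the Euclidean-metric reduction, which is exactly what you propose as the primary move. Your supplementary sketch (greedy $4$-approximation for APX membership, L-reduction with degree-reduction gadgets for APX-hardness) is a faithful outline of the standard argument behind the cited result, but it goes beyond anything the paper itself attempts, so there is nothing to reconcile.
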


Our second ingredient is an embedding function $h_G(v)$ that encodes graph adjacency in Euclidean space.

\begin{restatable}{lemma}{AdjacencyEmbedding}
\label{lem:adjacency_embedding}
Let $G = (V, E)$ be a simple undirected graph
with $n=|V|$, $m=|E|$, and max degree~$\Delta$.
There exists an embedding
$h_{G} : V \rightarrow \R^{n + m}$ such that
if $\{u, v\} \in E$ then
\[
    \norm{h_{G}(u) - h_{G}(v)}_{2}
    \le
    1 - \frac{1}{2 (\Delta + 1)},
\]
and
if $\{u, v\} \not\in E$ then
$\norm{h_{G}(u) - h_{G}(v)}_{2} = 1$.
\end{restatable}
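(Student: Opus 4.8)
The plan is to build the embedding $h_G$ coordinate-by-coordinate, using one coordinate per vertex and one coordinate per edge, so that the squared distance bookkeeping is transparent. Index the coordinates of $\R^{n+m}$ by $V \cup E$. For a vertex $v$, set the "vertex block" of $h_G(v)$ to be $\alpha$ in coordinate $v$ and $0$ in every other vertex coordinate, for a scalar $\alpha > 0$ to be chosen. For the "edge block", set the coordinate indexed by an edge $e$ to equal $\beta_v$ if $v \in e$ and $0$ otherwise, where $\beta_v$ may depend on $v$ (I expect a common value $\beta$ will suffice, but leaving room to scale per-vertex handles the degree discrepancy if needed). The point of the edge block is that two endpoints of the same edge $e$ \emph{agree} in coordinate $e$, which reduces their distance relative to a non-adjacent pair; non-adjacent vertices never share an edge coordinate.

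Concretely, for $\{u,v\}\notin E$ the two images differ in exactly two vertex coordinates (each by $\alpha$) and in $\deg(u)+\deg(v)$ edge coordinates (by $\beta$ each), giving
\[
\norm{h_G(u)-h_G(v)}_2^2 = 2\alpha^2 + \bigl(\deg(u)+\deg(v)\bigr)\beta^2.
\]
For $\{u,v\}\in E$ the shared edge coordinate contributes $0$, so
\[
\norm{h_G(u)-h_G(v)}_2^2 = 2\alpha^2 + \bigl(\deg(u)+\deg(v)-2\bigr)\beta^2.
\]
The obstacle is that these quantities depend on the degrees, whereas the lemma demands a fixed value $1$ in the non-adjacent case and a fixed upper bound $1 - \tfrac{1}{2(\Delta+1)}$ in the adjacent case. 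I would fix this by padding: add, for each vertex $v$, an extra amount of "slack mass" so that the total squared norm contribution is normalized. The cleanest realization is to introduce $\Delta - \deg(v)$ additional private edge-like coordinates for vertex $v$ (coordinates used by no other vertex), each set to $\beta$; this makes every vertex behave as if it had degree exactly $\Delta$, at the cost of enlarging the ambient dimension to at most $n + n\Delta$. Since the lemma only asserts the \emph{existence} of an embedding into $\R^{n+m}$, and one can also merge these private coordinates more carefully, I would either (i) argue the stated $n+m$ dimension suffices by a tighter accounting using $\sum_v(\Delta-\deg(v)) \le$ something comparable to $m$ in bounded-degree graphs, or (ii) note that any embedding into $\R^{N}$ with $N \ge n+m$ can be composed with an isometry; most likely the paper's construction is slightly different and already lands in $n+m$, so I would reverse-engineer the exact padding that achieves this.

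With every vertex normalized to effective degree $\Delta$, set $2\alpha^2 + \Delta\beta^2 = 1$ so that all non-adjacent pairs are at distance exactly $1$. Then adjacent pairs have squared distance $1 - 2\beta^2$, so I want $1 - 2\beta^2 \le \bigl(1 - \tfrac{1}{2(\Delta+1)}\bigr)^2$. Since $\bigl(1-\tfrac{1}{2(\Delta+1)}\bigr)^2 = 1 - \tfrac{1}{\Delta+1} + \tfrac{1}{4(\Delta+1)^2} \ge 1 - \tfrac{1}{\Delta+1}$, it suffices to choose $\beta^2 = \tfrac{1}{2(\Delta+1)}$, which forces $\alpha^2 = \tfrac12\bigl(1 - \tfrac{\Delta}{2(\Delta+1)}\bigr) = \tfrac{\Delta+2}{4(\Delta+1)} > 0$, so the choice is valid. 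The final step is to verify that the triangle inequality is never an issue — it is automatic since we are in a genuine Euclidean space — and to double-check the two displayed distance identities by a direct coordinate count, which is routine. The main intellectual content is choosing the coordinate layout so that adjacency shows up as a single cancelled coordinate and then picking $\alpha, \beta$ to hit the required bound; the main technical nuisance is the degree-normalization padding and making the ambient dimension come out to exactly $n+m$.
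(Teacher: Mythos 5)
Your construction is essentially the right one and is a close cousin of the paper's: both index coordinates by edges (the paper folds your ``vertex block'' in as a self-loop per vertex, which is exactly how it lands in dimension $n+m$), and both exploit the fact that adjacent endpoints agree on their shared edge coordinate, followed by the same $\sqrt{1-x}\le 1-x/2$ step to convert the squared-distance gap into the stated bound. The difference is in how the degree dependence is killed. The paper sets $h_G(v)_e=\sqrt{1/(2\deg'(v))}$ for $e\ni v$ in the self-looped graph, so every vertex has squared norm exactly $\tfrac12$ regardless of degree; non-adjacent pairs are then at distance exactly $1$ with no padding, adjacent pairs have squared distance $1-\sqrt{1/(\deg'(u)\deg'(v))}\le 1-\tfrac{1}{\Delta+1}$, and the dimension is $n+m$ on the nose. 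Your per-vertex scaling $\beta_v$, which you set aside, is precisely this trick; had you taken $\beta_v=\sqrt{1/(2(\deg(v)+1))}=\alpha_v$ you would have reproduced the paper's proof and avoided the padding entirely.

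Two loose ends in your version as written. First, a factor-of-two slip: with every vertex padded to effective degree $\Delta$, a non-adjacent pair differs in $2\Delta$ edge coordinates, so the normalization should be $2\alpha^2+2\Delta\beta^2=1$, giving $\alpha^2=\tfrac{1}{2(\Delta+1)}$ rather than $\tfrac{\Delta+2}{4(\Delta+1)}$; this does not affect validity since $\alpha^2>0$ either way and the adjacent/non-adjacent gap is still $2\beta^2$. Second, the dimension: your padded construction uses $n(\Delta+1)-m$ coordinates, which can exceed $n+m$ (e.g., a graph with a few edges and many isolated vertices), and your option (i) of tighter accounting will not rescue it in general. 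The clean fix for option (ii) is to observe that any $n$ points in any Euclidean space lie in an affine subspace of dimension at most $n-1\le n+m$, so the embedding can always be isometrically relocated into $\R^{n+m}$. With either that observation or the degree-normalized weights, your argument goes through.
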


\begin{restatable}[]{theorem}{HardnessEuclideanMetrics}
\label{thm:hardness_euclidean_metrics}
\DiverseDataSummarization is \textnormal{APX}-complete for the Euclidean metric if $g$ is a linear function.
\end{restatable}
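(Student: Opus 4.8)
The plan is to prove the two directions of APX-completeness. Membership in APX is immediate: the Euclidean metric is a special case of a general metric, so \Cref{thm:linear_approximation_ratio} already supplies a constant-factor (indeed $(\sfrac{2}{3}-\varepsilon)$) approximation for \DiverseDataSummarization with linear utility. For APX-hardness I would give an L-reduction from the maximum independent set problem on graphs of maximum degree~$3$, which is APX-hard by \Cref{lem:bounded_degree_mis_hardness}, using the embedding $h_G$ of \Cref{lem:adjacency_embedding} to realize graph adjacency geometrically.

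Given a graph $G=(V,E)$ with $n=|V|$, $m=|E|$, maximum degree $\Delta \le 3$, and independence number $\alpha(G)$, the reduction builds the \DiverseDataSummarization instance with point set $\{h_G(v) : v \in V\} \subseteq \R^{n+m}$, unit weights (so $g(S)=|S|$), cardinality bound $k=n$, and diversity strength $\lambda = 2(\Delta+1)n$. By \Cref{lem:adjacency_embedding} the diameter is $d_{\max}=1$, hence $\div(S)=1$ for every independent set $S$ of $G$ (including the cases $|S|\le 1$), while $\div(S)\le 1-\frac{1}{2(\Delta+1)}$ whenever $S$ contains an edge; the objective is therefore $f(S)=|S|+\lambda\cdot\div(S)$.

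The crucial calibration is that $\lambda$ is large enough that the diversity term forces the optimum onto independent sets. A greedy maximal independent set $I$ (computable in polynomial time) satisfies $|I|\ge n/(\Delta+1)$ and $f(I)=|I|+\lambda\ge\lambda$, whereas any $S$ containing an edge has $f(S)\le n+\lambda\bigl(1-\frac{1}{2(\Delta+1)}\bigr)=\lambda$ by the choice $\lambda=2(\Delta+1)n$. Consequently $\OPT=\alpha(G)+\lambda$, attained by a maximum independent set, and any solution whose vertex set is not independent in $G$ can be replaced by $I$ without decreasing $f$, so the solution-recovery map of the L-reduction may assume its input corresponds to an independent set $S$ with $f(S)=|S|+\lambda$. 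For the first L-reduction condition, the degree bound gives $n\le(\Delta+1)\alpha(G)$, hence $\OPT=\alpha(G)+\lambda\le\bigl(1+2(\Delta+1)^2\bigr)\alpha(G)$, a constant multiple of the optimum of the independent-set instance. For the second condition, from a solution of value $v$ we recover an independent set $S$ with $|S|\ge v-\lambda$ and $|S|\le\alpha(G)$, so $\alpha(G)-|S|\le\alpha(G)+\lambda-v=\OPT-v$, i.e.\ the recovery loss is at most the objective loss (so $\beta_2=1$). This is an L-reduction, so \DiverseDataSummarization with linear utility in the Euclidean metric is APX-hard, and together with membership in APX it is APX-complete.

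The main obstacle is the tension in choosing $\lambda$: it must be large enough that non-independent sets are provably dominated---so that approximately optimizing $f$ coincides with approximately finding a maximum independent set---yet remain only a constant-factor multiple of $\alpha(G)$, so that an arbitrarily good approximation for \DiverseDataSummarization yields an arbitrarily good approximation for independent set. This is exactly what the bounded-degree hypothesis buys, via $\alpha(G)=\Theta(n)$; it also keeps the reduction an L-reduction rather than merely a polynomial-time reduction. Beyond this point the argument is routine: one disposes of degenerate cases (such as $E=\emptyset$ or $n\le\Delta+1$), where the claim is trivial, and checks that the coordinates of $h_G$ and the value of $\lambda$ have polynomial bit-size.
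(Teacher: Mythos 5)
Your proof is correct, and it rests on the same two ingredients as the paper's --- the APX-hardness of maximum independent set on degree-$3$ graphs (\Cref{lem:bounded_degree_mis_hardness}) and the adjacency embedding $h_G$ of \Cref{lem:adjacency_embedding} --- but the way you transfer the hardness is genuinely different. The paper works with the promise (gap) version: it keeps the cardinality constraint $k$ from the independent-set instance, assigns weights $w(v)=\alpha/k$ and $\lambda=1-\alpha$ so that $\OPT=1$ in the YES case, and then balances the two ways an algorithm could cheat in the NO case (take $k$ non-independent points with diversity at most $\sfrac{7}{8}$, or take only $(1-\varepsilon_0)k$ independent points) by tuning $\alpha=1/(1+8\varepsilon_0)$, yielding an explicit constant inapproximability gap. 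You instead give an L-reduction with unit weights, a vacuous cardinality bound, and a large $\lambda=2(\Delta+1)n$ chosen so that every non-independent set is dominated by a greedy maximal independent set; the optimum then equals $\alpha(G)+\lambda$ exactly, and the degree bound (via $\alpha(G)\ge n/(\Delta+1)$) keeps $\lambda$ within a constant factor of $\alpha(G)$, giving constants $\beta_1=1+2(\Delta+1)^2=33$ and $\beta_2=1$. Your calibration identity $n+\lambda\bigl(1-\tfrac{1}{2(\Delta+1)}\bigr)=\lambda$ and the recovery map are both sound, and membership in APX via \Cref{thm:linear_approximation_ratio} is the same in both arguments. What each buys: the paper's version produces a concrete hardness constant on a normalized instance, while yours is more modular --- any inapproximability factor for bounded-degree independent set transfers with explicit L-reduction constants --- at the price of a $\lambda$ that grows with $n$ (still polynomial, so harmless). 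Both arguments, like the paper, quietly pass over the irrationality of the coordinates of $h_G$; this is benign since all squared distances are rational and every comparison the reduction needs can be carried out exactly.
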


We sketch the main idea below and defer proofs to \Cref{app:hardness}.
Let $\mathcal{I}(G)$ be the set of independent sets of~$G$.
Using \Cref{lem:adjacency_embedding}, for any set of nodes $S \subseteq V$ with $|S| \ge 2$ in a graph $G$ with max degree $\Delta \le 3$, we have the property:
\begin{itemize}
    \item $S \in \mathcal{I}(G) \implies \div(S) = 1$;
    \item $S \not\in \mathcal{I}(G) \implies \div(S) \le 1 - \frac{1}{2(\Delta + 1)} \le 1 - \sfrac{1}{8}$.
\end{itemize}
The gap is at least $\sfrac{1}{8}$ for all such graphs (i.e., a universal constant), 
so setting $\lambda = 1$ allows us to prove that \DiverseDataSummarization inherits the \textnormal{APX}-completeness of bounded-degree max independent set.

\section{Experiments}
\label{sec:experiments}

\subsection{Warm-up: Synthetic dataset}
\label{subsec:synthetic}

We first compare \GIST against baseline methods
on a simple \DiverseDataSummarization task with
a normalized budget-additive utility function (i.e., monotone submodular)
and a set of weighted Guassian points.

\paragraph{Setup.}
Generate $n = 1000$ points $\mat{x}_i \in \R^{d}$, for $d = 64$,
where each $\mat{x}_{i} \sim \mathcal{N}(\mat{0}, \mat{I}_d)$ is i.i.d.\
and assigned a uniform continuous weight $w_i \sim \mathcal{U}_{[0,1]}$.
For $\alpha, \beta \in [0, 1]$, the objective function
trades off between the average (capped) utility of the points and their min-distance diversity reward:
\begin{equation*}
    f(S) = \alpha \cdot \min\set*{\frac{1}{k} \sum_{i \in S} w_i, \,\beta} + (1 - \alpha) \cdot \div(S).
\end{equation*}

We consider three baseline methods:
\texttt{random}, \texttt{simple}, and \texttt{greedy}.
\texttt{random} selects $k$ random points, permutes them, and returns the best prefix since the objective function $f(S)$ is non-monotone.
\texttt{simple} is the $0.387$-approximation algorithm in \Cref{subsec:warm-up}.
\texttt{greedy} builds $S$ one point at a time by selecting the point with index
$i^*_t = \argmax_{i \in V \setminus S_{t-1}} f(S_{t-1} \cup \{i\}) - f(S_{t-1})$ at each step $t \in [k]$,
and returns the best prefix $S = \argmax_{t \in [k]} f(S_{t})$.
\GIST considers all possible $D \gets \{\dist(u,v)/2 : u, v \in V\}$
since $n = 1000$,
which yields an exact $\sfrac{2}{3}$-approximation ratio.

\paragraph{Results.}

\begin{wrapfigure}{r}{0.5\textwidth}
\centering
\vspace{-0.8cm}
\includegraphics[width=0.48\textwidth]{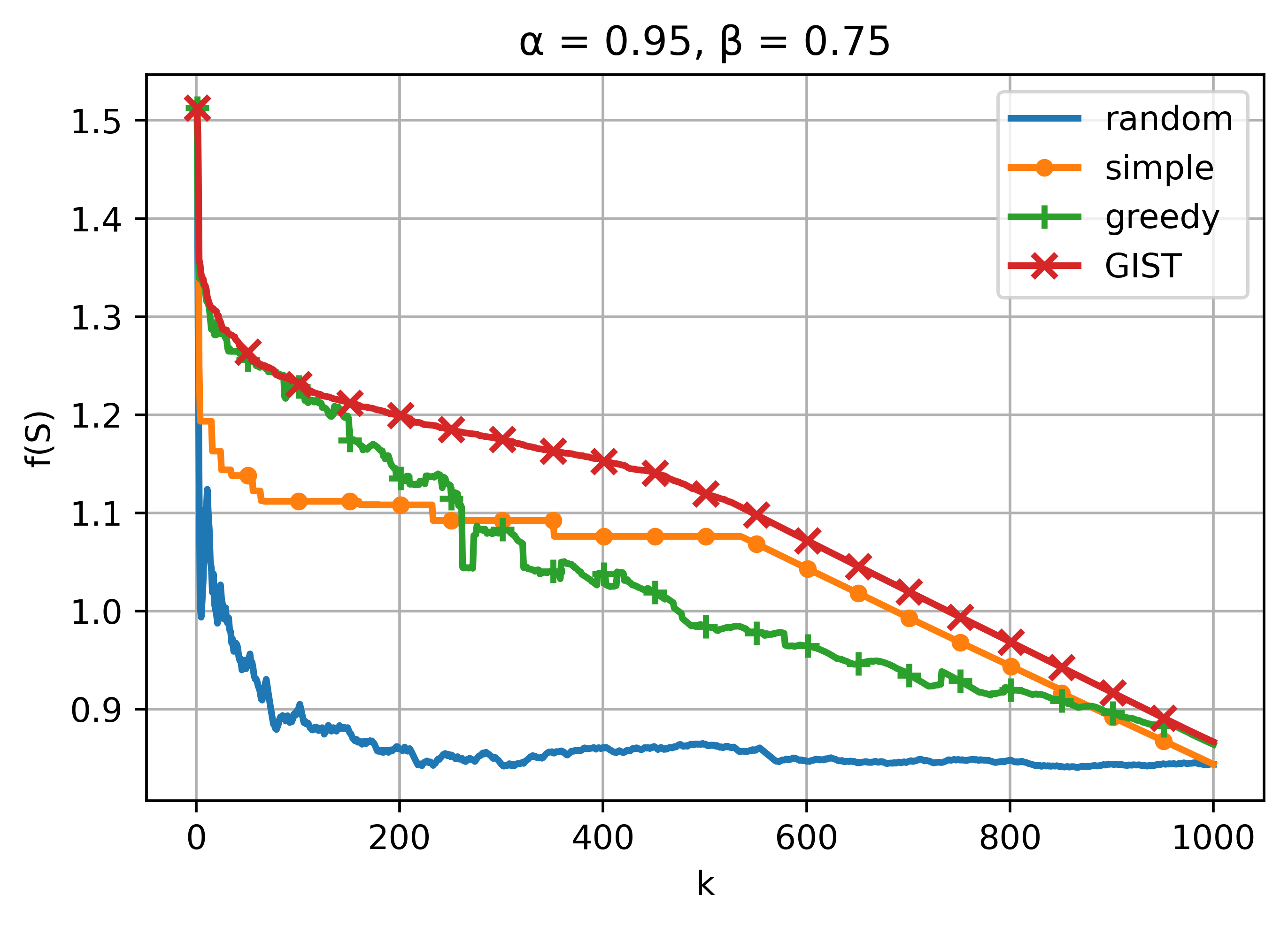}
\vspace{-0.15cm}
\caption{$f(S_{\text{ALG}})$ for baseline methods and $\GIST$, for each cardinality constraint $k \in [n]$, on synthetic data with $n = 1000$, $\alpha = 0.95$, and $\beta = 0.75$.}
\label{fig:synthetic}
\end{wrapfigure}

We plot the values of $f(S_\text{ALG})$ for the baseline methods and $\GIST$,
for each $k \in [n]$, in \Cref{fig:synthetic}.
\GIST dominates \texttt{simple}, \texttt{greedy}, and \texttt{random} in all cases.
\texttt{greedy} performs poorly for $k \ge 100$, which
is surprising since it is normally a competitive method for subset selection tasks.
\texttt{simple} beats \texttt{greedy} for mid-range values of~$k$,
i.e., $k \in [250, 900]$, but it is always worse than $\GIST$.
In \Cref{app:synthetic}, we sweep over $\alpha$ and $\beta$, and show how these hyperparameters affect the $f(S_\text{ALG})$ plots.
Finally, we remark that the plots in \Cref{fig:synthetic} are decreasing in $k$ 
because (i) we use a \emph{normalized} budget-additive utility function, and
(ii) $\div(S)$ is a monotone-decreasing set function.

\subsection{Image classification}
\label{subsec:imagenet}

Our ImageNet data sampling experiment compares the top-1 image classification accuracy achieved by different single-shot subset selection algorithms.

\paragraph{Setup.}
We use the standard vision dataset ImageNet~\citep{russakovsky2014imagenet}
containing \textasciitilde 1.3 million images and 1000 classes.
We select 10\% of the images uniformly at random and use them to train an initial ResNet-56 model~$\bm{\theta}_0$ \citep{he2016deep}.
Then we use the model $\bm{\theta}_0$ to compute a
2048-dimensional unit-length embedding~$\bm{e}_i$
and uncertainty score for each example $\bm{x}_i$.
The \emph{margin-based} uncertainty score of $\mat{x}_i$ is given by
$u_i = 1 - (\Pr(y = b \mid \mat{x}_i ; ~\bm{\theta}_0) - \Pr(y = b' \mid \mat{x}_i ; ~\bm{\theta}_0))$,
which measures the difference between the probability of the best predicted class label $b$ and second-best label $b'$ for an example.
Finally, we use the fast maximum inner product search of~\citet{avq_2020} to build a $\Delta$-nearest neighbor graph $G$ in the embedding space using $\Delta = 100$
and
cosine distance, i.e., $\dist(\bm{x}_i, \bm{x}_j) = 1 - \bm{e}_i \cdot \bm{e}_j$.
We present all model training hyperparameters in \Cref{app:experiments}.

We compare \GIST with several state-of-the-art benchmarks:
\begin{itemize}
\item \texttt{random}:
We draw samples from the dataset uniformly at random without replacement.
This is a simple and lightweight approach that promotes diversity in many settings
and provides good solutions.

\item \texttt{margin}~\citep{Roth2006}:
Margin sampling selects the top-$k$ points using the uncertainty scores $u_i$,
i.e., based on how hard they are to classify.
It is not incentivized to output a diverse set of training examples.

\item \texttt{$k$-center}~\citep{sener2018active}:
We run the classic greedy algorithm for $k$-center on $G$.
We take the distance between non-adjacent nodes to be the max distance among all pairs of adjacent nodes in $G$.

\item \texttt{submod}:
We select a subset by greedily maximizing the submodular objective function
\begin{equation}
\label{eqn:imagenet_submod}
    g(S) = \alpha_s \sum_{i\in S} u_i - \beta_{s} \sum_{i,j\in S}s(i,j),
\end{equation}
subject to the constraint $|S| \le k$,
where $s(i,j)=1 - \dist(\bm{x}_i,\bm{x}_j)$ is the cosine similarity between adjacent nodes in $G$.
Similar pairwise-diversity submodular objective functions have also been used in \citep{mirzasoleiman2016fast,fahrbach2019non,Iyer2021,kothawade2021,Ramalingam2021}.
This is a \emph{different diversity objective} than $\div(S)$ that keeps $g(S)$ submodular but allows it to be non-monotone.
We tuned for best performance by selecting $\alpha_s =0.9$ and $\beta_s = 0.1$.
\end{itemize}

Finally, we bootstrap the \texttt{margin} and \texttt{submod} objectives with \DiverseDataSummarization and run \GIST with $\varepsilon = 0.05$:
\begin{itemize}

\item \texttt{GIST-margin}:
Let $f(S) = \alpha \cdot \sum_{i\in S} u_i + (1-\alpha) \cdot \div(S)$ for $\alpha \in [0, 1]$.
This uses the same \emph{linear utility function} as margin sampling.
We optimize for performance and set $\alpha = 0.9$.

\item \texttt{GIST-submod}: 
Let $f(S) = \alpha \cdot g(S) + (1 - \alpha) \cdot \div(S)$,
where $g(S)$ is the same submodular function in \eqref{eqn:imagenet_submod}
with $\alpha_s = 0.9$ and $\beta_s = 0.1$.
We optimize for performance and set $\alpha = 0.95$.

\end{itemize}

\begin{table*}
    \centering
    \caption{Top-1 classification accuracy (\%) on ImageNet for different single-shot data downsampling algorithms.
    The cardinality constraint $k$ is expressed as a percent of the \textasciitilde 1.3 million examples. The results are the average of three trials and the top performance for each $k$ is shown in bold.
    }
    \begin{tabular}{@{}cccccccc@{}}
        \toprule
        $k$ (\%) & \texttt{random} & \texttt{margin} & \texttt{$k$-center} & \texttt{submod} & \texttt{GIST-margin} & \texttt{GIST-submod} \\
        \midrule
        30 & 66.23 \scriptsize{$\pm$ 0.15} & 65.97 \scriptsize{$\pm$ 0.33} &  66.71 \scriptsize{$\pm$ 0.57} &  66.52 \scriptsize{$\pm$ 0.21} & 66.90 \scriptsize{$\pm$ 0.19} & {\bf 67.24} \scriptsize{$\pm$ 0.06} \\

        40 & 69.17 \scriptsize{$\pm$ 0.12} & 69.73 \scriptsize{$\pm$ 0.38} &  70.06 \scriptsize{$\pm$ 0.06} &  70.71 \scriptsize{$\pm$ 0.24} & 70.51 \scriptsize{$\pm$ 0.12} & {\bf 70.76} \scriptsize{$\pm$ 0.35} \\

        50 & 71.05 \scriptsize{$\pm$ 0.16} & 72.33 \scriptsize{$\pm$ 0.09} &  73.01 \scriptsize{$\pm$ 0.10} &  72.69 \scriptsize{$\pm$ 0.15} & 72.69 \scriptsize{$\pm$ 0.34} & {\bf 73.15} \scriptsize{$\pm$ 0.21} \\

        60 & 72.49 \scriptsize{$\pm$ 0.28} & 73.43 \scriptsize{$\pm$ 0.06} &  73.60 \scriptsize{$\pm$ 0.31} &  74.20 \scriptsize{$\pm$ 0.08} & {\bf 74.34} \scriptsize{$\pm$ 0.01} & 74.30 \scriptsize{$\pm$ 0.27} \\

        70 & 73.70 \scriptsize{$\pm$ 0.22} & 74.49 \scriptsize{$\pm$ 0.22} &  74.24 \scriptsize{$\pm$ 0.23} &  75.24 \scriptsize{$\pm$ 0.30} & {\bf 75.41} \scriptsize{$\pm$ 0.23} & 75.32 \scriptsize{$\pm$ 0.19} \\
        
        80 & 74.42 \scriptsize{$\pm$ 0.39} & 75.01 \scriptsize{$\pm$ 0.05} &  75.17 \scriptsize{$\pm$ 0.11} &  75.91 \scriptsize{$\pm$ 0.07} & {\bf 75.96} \scriptsize{$\pm$ 0.28} & 75.45 \scriptsize{$\pm$ 0.16} \\

        90 & 75.16 \scriptsize{$\pm$ 0.13} & 75.11 \scriptsize{$\pm$ 0.13} &  75.00 \scriptsize{$\pm$ 0.38} &  75.84 \scriptsize{$\pm$ 0.10} & {\bf 76.12} \scriptsize{$\pm$ 0.33} & 76.03 \scriptsize{$\pm$ 0.02} \\
        \bottomrule
    \end{tabular}
    \label{table:imagenet_t5}
\end{table*}

\paragraph{Results.}
We run each sampling algorithm with cardinality constraint $k$ on the full dataset to get a subset of examples that we then use to train a new ResNet-56 model.
We report the average {top-1} classification accuracy of these models in \Cref{table:imagenet_t5}.
\GIST with margin or submodular utility is superior to all baselines.
Interestingly, there is a cut-over value of $k$ where the best algorithm switches from \texttt{GIST-submod} to \texttt{GIST-margin}.
We also observe that \texttt{GIST-submod} and \texttt{GIST-margin}
outperform \texttt{submod} and \texttt{margin}, respectively.
This demonstrates how $\div(S)$ encourages diversity in the set of sampled points and improves downstream model quality.
The running time of \texttt{margin} and \texttt{submod} is 3--4 minutes per run on average.
\GIST is similar to the margin or submodular algorithms for a given distance threshold $d$.
The end-to-end running time is dominated by training ImageNet models, which takes more than a few hours even with several accelerators (e.g., GPU/TPU chips). 

\section*{Conclusion}

We introduce a novel subset selection problem called \DiverseDataSummarization
that combines the utility of the selected points (modeled as a monotone submodular function $g$) with the $\div(S) = \min_{u,v\in S : u \ne v} \dist(u,v)$ diversity objective.
We design and analyze the \GIST algorithm, which achieves a $\sfrac{1}{2}$-approximation guarantee
by solving a series of maximal-weight independent set instances on intersection graphs
with the \GreedyIndependentSet bicriteria-approximation algorithm.
We complement \GIST with a $0.5584$ hardness of approximation.
It is an interesting open theory problem to close the gap between the $0.5$ approximation ratio and $0.5584$ inapproximability.
For linear utilities, we show that \GIST achieves a $\sfrac{2}{3}$-approximation
and that it is NP-hard to find a $(\sfrac{2}{3}+\varepsilon)$-approximation, for any $\varepsilon > 0$.

Our empirical study starts by comparing \GIST to existing methods for \DiverseDataSummarization on a simple synthetic task to show the shortcomings of baseline methods (in particular the \texttt{greedy} algorithm).
Then we compare the top-1 image classification accuracy of \GIST and state-of-the-art data sampling methods for ImageNet,
demonstrating the benefit of optimizing the trade-off between a submodular utility and max-min diversity.

{
\small
\bibliography{references}
\bibliographystyle{abbrvnat}
}

\newpage
\appendix

\section{$f(S)$ is not submodular}
\label{app:not_submodular}
Consider the instance on four points $N = \{a,b,c,d\} \subseteq \R$ where
$a=0$, $b=1$, and $c=d=2$ with the (one-dimensional) Euclidean metric,
i.e., all the points are collinear.
Let $g(S) = 0$ for all $S \subseteq V$.
For $f(S)$ to be submodular, it must hold that for every $S \subseteq T \subseteq N$ and $x \in N \setminus T$,
\[
    f(S \cup \{x\}) - f(S) \ge f(T \cup \{x\}) - f(T).
\]

However, if $x = b$, $S = \{a, c\}$ and $T = \{a, c, d\}$, we have:
\begin{align*}
    f(S \cup \{x\}) - f(S) &= f(\{a, b, c\}) - f(\{a, c\}) = 1 - 2 = -1 \\
    f(T \cup \{x\}) - f(T) &= f(\{a, b, c, d\}) - f(\{a, c, d\}) = 0 - 0 = 0,
\end{align*}
so $f(S)$ is not submodular.

In this instance, $f(S)$ is not monotone.
However, a similar monotone but still not submodular $f(S)$
can be defined by setting $g(S) = \sum_{v \in S} w(v)$ where $w(v) = 2$ for every $v \in N$. 

\section{Greedy does not give a constant-factor approximation guarantee}
\label{sec:greedy-no-approx}
The objective function $f$ is highly non-monotone since the diversity term can decrease as we add items. Consequently, the standard greedy algorithm, which rejects any item with a negative marginal gain, can have an arbitrarily poor performance.

We demonstrate this with a hard instance. Let the submodular part of the objective be $g(S) = |S|$. For the diversity component, define the distance between a specific pair $(u,v)$ to be $\text{dist}(u, v) = 2 + 2 \varepsilon$, while for all other pairs $(x,y) \ne (u,v)$, we have $\text{dist}(x, y) = 1 + \varepsilon$. The greedy algorithm first selects the set $\{u, v\}$, achieving a value of $f(\{u,v\}) = g(\{u,v\}) + \text{dist}(u,v) = 4+2\varepsilon$. The algorithm then terminates because adding any subsequent item gives a submodular gain of 1 but causes a diversity loss of $1 + \varepsilon$, resulting in a negative marginal gain.
An optimal solution of size $k$, however, can achieve a value of at least $k$ from the submodular term alone.
The resulting approximation ratio for greedy is at most $(4 + 2\varepsilon) / k$, which approaches 0 as $k$ grows.
Thus, greedy offers no constant-factor approximation guarantee.
Furthermore, modifying greedy to accept items with negative marginal value does not solve this problem since one can construct instances where an optimal solution value is dominated by a diversity term such that selecting any set of size $k$ reduces the diversity term to zero.

\section{Missing analysis for Section~\ref{sec:algorithm}}

\subsection{Proof of Theorem~\ref{thm:linear_approximation_ratio}}
\label{app:linear_approximation_ratio}

The following result is a tighter analysis of the bicriteria approximation of \GreedyIndependentSet (\Cref{lem:submod_k_centers_guarantee}) if $g(S)$ is a linear function.
This is the key ingredient for improving the approximation ratio of \GIST to $\sfrac{2}{3} - \varepsilon$.

\begin{restatable}{lemma}{KCentersGauranteeLinear}
\label{lem:k_centers_guarantee_linear}
Let $g(S) = \sum_{v \in S} w(v)$ be a linear function with nonnegative weights $w : V \rightarrow \R_{\ge 0}$.
Let $S_d^*$ be a max-weight independent set of the intersection graph $G_{d}(V)$ of size at most $k$.
If $T$ is the output of $\GreedyIndependentSet(V,g,d',k)$, for $d' \le d/2$, then $w(T) \ge w(S_d^*)$.
\end{restatable}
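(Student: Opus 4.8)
The plan is to reuse the combinatorial skeleton of the proof of \Cref{lem:submod_k_centers_guarantee} almost verbatim and to exploit linearity only at the single step where the greedy choice is invoked. First I would write $T = \{t_1, \dots, t_{k'}\}$ with the $t_i$ listed in the order $\GreedyIndependentSet$ selects them, and set $B_i = \{v \in V : \dist(t_i, v) < d'\}$, the open radius-$d'$ ball around $t_i$. Since $d' \le d/2$, any two distinct points of $B_i$ lie at distance less than $2d' \le d$ and hence are adjacent in $G_d(V)$; because $S_d^*$ is an independent set of $G_d(V)$, each $B_i$ therefore contains at most one point of $S_d^*$.

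Next I would build the same injective map $h : S_d^* \to T$ as in \Cref{lem:submod_k_centers_guarantee}: a point $s \in S_d^*$ lying in some ball is mapped to $t_i$ for the smallest index $i$ with $s \in B_i$ (this is injective on such points because each $B_i$ absorbs at most one element of $S_d^*$); and if some $s \in S_d^*$ lies in no ball, then $\GreedyIndependentSet$ must have run the full $k$ iterations — otherwise it stopped with a maximal independent set of $G_{d'}(V)$, which would force every vertex, in particular $s$, into some ball — so $|T| = k \ge |S_d^*|$ and the remaining points of $S_d^*$ can be assigned injectively to the unused points of $T$.

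The key step, and the only place linearity enters, is the claim that $w(h(s)) \ge w(s)$ for every $s \in S_d^*$. Write $h(s) = t_i$ and consider the candidate set $C = \{v \in V \setminus \{t_1, \dots, t_{i-1}\} : \dist(v, \{t_1, \dots, t_{i-1}\}) \ge d'\}$ examined by the algorithm right before it adds $t_i$. If $s$ is a covered point, minimality of $i$ gives $s \notin B_j$ for all $j < i$, i.e.\ $\dist(s, t_j) \ge d'$ and $s \ne t_j$; if $s$ is an uncovered point, then $s \notin B_j$ for \emph{all} $j$. In either case $s \in C$, so since $g$ is linear the greedy rule $t_i = \argmax_{v \in C} g(v \mid \{t_1,\dots,t_{i-1}\}) = \argmax_{v \in C} w(v)$ yields $w(t_i) \ge w(s)$. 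Summing over the injective map $h$ and using $w \ge 0$ then gives $w(S_d^*) = \sum_{s \in S_d^*} w(s) \le \sum_{s \in S_d^*} w(h(s)) = \sum_{t \in h(S_d^*)} w(t) \le w(T)$, which is the claim.

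I expect the only delicate point to be the same bookkeeping already present in \Cref{lem:submod_k_centers_guarantee}: verifying that an uncovered point of $S_d^*$ forces $|T| = k$ so that $h$ can be completed injectively, and that covered and uncovered points alike survive as valid candidates at the moment their image is selected. Everything else is routine, and the linear case is in fact easier than the submodular one, since the marginal gain of each vertex is its fixed weight: the greedy inequality upgrades directly from $g(h(s) \mid T_s) \ge g(s \mid T_s)$ to $w(h(s)) \ge w(s)$, there is no need to track the prefixes $T_s$ or invoke submodularity, and the extra factor of $1/2$ in the submodular bound disappears.
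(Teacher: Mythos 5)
Your proposal is correct and follows essentially the same argument as the paper's proof: open balls $B_i$ of radius $d'$ each capture at most one point of $S_d^*$, a covered optimal point is still a valid candidate at the step its first covering center is chosen (so the greedy rule, which for linear $g$ is just max weight, gives $w(t_i)\ge w(s)$), and uncovered optimal points force $|T|=k$ and are dominated by every selected point. The paper phrases the bookkeeping via the disjoint covering sets $C_i = B_i\setminus(B_1\cup\cdots\cup B_{i-1})$ and splits the sum into covered and uncovered parts rather than through an explicit injection $h$, but this is only a cosmetic difference.
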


\begin{proof}
Let $k' = |T|$
and
$t_1, t_2, \dots, t_{k'}$ be the points in $T$ in the order that
\GreedyIndependentSet selected them.
Let $B_i = \{v \in V : \dist(t_i, v) < d'\}$
be the points in $V$ contained in the radius-$d'$ open ball
around $t_i$.
First, we show that each $B_i$ contains at most one point in $S_d^*$.
If this is not true, then some $B_i$ contains two different points
$u, v \in S_{d}^*$.
Since $\dist(\cdot, \cdot)$ is a metric, this means
\begin{align*}
    \dist(u, v) &\le \dist(u, t_i) + \dist(t_i, v) \\
      &< d' + d' \\
      &\le d/2 + d/2 \\
      &= d,
\end{align*}
which contradicts the assumption that $S_d^*$ is an independent set of $G_d(V)$.
Note that it is possible to have $B_i \cap B_j \ne \emptyset$, for $i \ne j$,
since these balls consider all points in $V$.

Now let $C_i \subseteq V$ be the set of uncovered points (by the open balls)
that become covered when \GreedyIndependentSet selects $t_i$.
Concretely, $C_i = B_i \setminus (B_1 \cup \cdots \cup B_{i-1})$.
Each $C_i$ contains at most one point in $S_d^*$ since $|B_i \cap S_{d}^*| \le 1$.
Moreover, if $s \in C_i \cap S_{d}^*$, then
$w(t_i) \ge w(s)$ because the points are sorted in non-increasing order
and selected if uncovered.

Let $A = C_1 \cup \cdots \cup C_{k'}$ be the set of points covered by
the algorithm.
For each point $s \in S_{d}^* \cap A$, there is exactly one covering
set $C_i$ corresponding to $s$. It follows that
\begin{align}
\label{eqn:covering_set_1}
    \sum_{s \in S_{d}^* \cap A} w(s)
    \le
    \sum_{i \in [k'] : S_{d}^* \cap C_{i} \ne \emptyset}
    w(t_i).
\end{align}
It remains to account for the points in $S_d^* \setminus A$.
If we have any such points, then $|T| = k$
since the points in $S_{d}^* \setminus A$ are uncovered at the end of the algorithm.
Further, for any $t_i \in T$ and $s \in S_{d}^* \setminus A$,
we have $w(t_i) \ge w(s)$ since $t_i$ was selected and the points are sorted
by non-increasing weight.
Therefore, we can assign each $s \in S_d^* \setminus A$ to a unique $C_i$ such
that $C_i \cap S_{d}^* = \emptyset$. It follows that
\begin{align}
\label{eqn:covering_set_2}
    \sum_{s \in S_{d}^* \setminus A} w(s)
    \le
    \sum_{i \in [k'] : S_{d}^* \cap C_{i} = \emptyset}
    w(t_i).
\end{align}
Adding the two sums together in \eqref{eqn:covering_set_1}
and \eqref{eqn:covering_set_2} completes the proof.
\end{proof}

\LinearApproximationRatio*

\begin{proof}[Proof of \Cref{thm:linear_approximation_ratio}]
Let $d^*$ be the minimum distance between two distinct points in $S^*$.
There are two cases: $d^* \le \varepsilon d_{\max}$ and $d^* > \varepsilon d_{\max}$.
In the first case, outputting the $k$ heaviest points (Line 2 of \GIST) yields a $(1-\varepsilon)$-approximation.
To see this, first observe that
\[
    \OPT \ge \lambda \cdot d_{\max} \ge \lambda \cdot \frac{d^*}{\varepsilon}
    \implies
    \varepsilon \cdot \OPT \ge \lambda \cdot d^*.
\]
The sum of the $k$ heaviest points upper bounds $g(S^*)$, so we have
\[
    \ALG \ge g(S^*) = \OPT - \lambda \cdot d^* \ge (1 - \varepsilon) \cdot \OPT.
\]

Now we consider the case where $d^* > \varepsilon d_{\max}$.
\GIST tries a threshold $d \in [d^* / (2(1+\varepsilon)), d^* / 2)$,
so \Cref{lem:k_centers_guarantee_linear} implies that $\GreedyIndependentSet(V, g, d, k)$ outputs a set $T$ such that
\begin{align}
\label{eqn:lower_bound_1}
    f(T) &\ge g(S^*) + \lambda \cdot d 
         \ge g(S^*) + \lambda \cdot \frac{d^*}{2(1+\varepsilon)}.
\end{align}
The max-diameter check on Lines 3--6 give us the lower bound
\begin{align}
\label{eqn:lower_bound_2}
    \ALG \ge \lambda \cdot d_{\max} \ge \lambda \cdot d^*.
\end{align}
Combining \eqref{eqn:lower_bound_1} and \eqref{eqn:lower_bound_2},
the following inequality holds for any $0 \le p \le 1$:
\begin{align*}
    \ALG
    &\ge
        p \cdot \bracks*{g(S^*) + \lambda \cdot \frac{d^*}{2(1+\varepsilon)}} + (1-p) \cdot \lambda \cdot d^* \\
    &= p \cdot g(S^*) + \parens*{1 - p + \frac{p}{2(1+\varepsilon)}} \cdot \lambda \cdot d^*.
\end{align*}
To maximize the approximation ratio as $\varepsilon \rightarrow 0$, we set $p=2/3$ by solving $p = 1-p/2$.
Therefore,
\begin{align}
\label{eqn:lower_bound_3}
    \ALG &\ge \frac{2}{3} \cdot g(S^*) 
              + \parens*{1 - \frac{2}{3} + \frac{1}{3(1+\varepsilon)}} \cdot \lambda \cdot d^* \\
        &= \frac{2}{3} \cdot g(S^*)
             + \frac{1}{3} \parens*{1 + \frac{1}{1+\varepsilon}} \cdot \lambda \cdot d^* \notag \\
        &\ge \frac{2}{3} \cdot g(S^*) + \frac{1}{3} \parens*{2 - \varepsilon} \cdot \lambda \cdot d^* \notag \\
        &\ge \parens*{\frac{2}{3} - \varepsilon} \cdot \OPT, \notag
    \end{align}
which completes the proof.
\end{proof}

\section{Missing analysis for Section~\ref{sec:hardness}}
\label{app:hardness}

\subsection{Proof of Theorem~\ref{thm:general_hardness}}

\GeneralHardness*

\begin{proof}
We construct our instance based on the collection of sets above as follows.
We set $\epsilon = \min\{\delta, \delta^2/6\}$ and $c=1$.
As mentioned above, we can assume that  $k > \sfrac{1}{\epsilon} + 1$ since there is always a brute force polynomial time algorithm for constant $k$ to distinguish between YES and NO instances. 

Suppose there are $n$ sets $S_1, S_2, \cdots, S_n$ in the collection $\F$. 
We represent these $n$ sets with $n$ points in the \DiverseDataSummarization instance. We overload the notation to show the corresponding point with $S_i$ too. 
For any subset of sets/points $T$, the submodular value $g(T)$ is defined as the cardinality of the union of corresponding sets,
i.e., the \emph{coverage submodular function}.
In other words, $g(T) = |\bigcup_{S_i \in T} S_i|$.

The distances between points/sets are either $d$ or $2d$. If two sets are disjoint and belong to two separate partitions $\F_r$ and $\F_{r'}$, their distance is $2d$. Otherwise, we set their distance to $d$.
So for any two sets $S, S'$ belonging to the same group $\F_r$, their distance is set to $d$.
Also, for any two sets with nonempty intersection, we set their distance to $d$ too. 
Any other pair of points will have distance $2d$.

We set $d$ to be $(1-\nicefrac{1}{e}) U$ where $U$ is the cardinality of the union of all sets, i.e., $U = |\bigcup_{S \in \F} S|$.
Finally, we set $\lambda = 1$ to complete the construction of the \DiverseDataSummarization instance.

In the YES case, the optimum solution is the family of $k$ disjoint sets from different groups that cover the entire universe.
The value of optimum in this case is $2d + U = (2(1-\nicefrac{1}{e})+1)U$.

The algorithm has two possibilities:
(case a) the algorithm gives up on the diversity objective and selects $k$ sets with minimum distance $d$,
or (case b) it aims for a diversity term of $2d$.
In case (a), the submodular value is at most  $(1- (1 - \nicefrac{1}{k})^k + \epsilon)U$ because of the property of the NO case. We note that if the algorithm finds a set of $k$ sets with union size above this threshold, we can conclude that we have a YES instance, which contradicts the hardness result. 

The limit of the upper bound for the submodular value as $k$ goes to infinity is $1  - \sfrac{1}{e} + \epsilon$.
We use its expansion series to derive:
\[
    \parens*{1 - \frac{1}{k}}^k
    \ge
    \frac{1}{e} - \frac{1}{2e k} - \frac{5}{24e k^2} - \cdots
\]

The sequence of negative terms declines in absolute value with a rate of at least $\nicefrac{1}{k}$. Thus, the absolute value of their total sum is at most the first deductive term $\nicefrac{1}{2e k}$ times $1/(1 - \nicefrac{1}{k}) = k/(k-1)$,
which gives the simpler bound:
\[
    \parens*{1 - \frac{1}{k}}^k
    \ge
    \frac{1}{e} - \frac{1}{2e (k-1)}.
\]

Since $k-1$ is at least $\nicefrac{1}{\epsilon}$, the submodular value in case (a) does  not exceed: 
\[
    \parens*{1-  \frac{1}{e} + \frac{1}{2e (k-1)} + \epsilon}U
    \le
    \parens*{1-  \frac{1}{e} + 2\epsilon}U.
\]

Recall that $d = (1 - \nicefrac{1}{e})U$, so  the ratio of what the algorithm achieves in case (a) and the optimum solution of YES case is at most: 
\begin{align*}
    \frac{(1-\nicefrac{1}{e} + 2\epsilon) U + d}{U+2d}
    &=
    \frac{2(1-\nicefrac{1}{e}) + 2\epsilon}{2(1-\nicefrac{1}{e}) + 1} \\
    &\le
    \frac{2(1-\nicefrac{1}{e})}{2(1-\nicefrac{1}{e}) + 1} + \delta.
\end{align*}

In case (b), the algorithm is forced to pick only disjoint sets to maintain a minimum distance of $2d$.
This means if the algorithm picks $\ell$ sets, their union has size $\ell \cdot \nicefrac{U}{k}$.
This is true because all sets in $\F$ have the same size and we know in the YES case, the union of $k$ disjoint sets covers the entire universe hence each set  has size $\nicefrac{U}{k}$. 
For the special case of $\ell = 1$, we know that only a $\nicefrac{1}{k} < \epsilon$ fraction of universe is covered. 
We upper bound the covered fraction in terms of $\epsilon$ for the other cases. 
The property of the NO case implies the following upper bound on $\ell$: 
\[
    \frac{\ell}{k} \le 1 - \parens*{1 - \frac{1}{k}}^{\ell} + \epsilon.
\]

We use the binomial expansion of $(1-\nicefrac{1}{k})^{\ell}$ and note that each negative term exceeds its following positive term in absolute value. This is true because of the cardinality constraint $\ell \leq k$. Therefore,
\begin{align*}
    \parens*{1-\frac{1}{k}}^{\ell}
    &\ge
    1 - \frac{\ell}{k} + \frac{\ell (\ell-1)}{2k^2} - \frac{\ell(\ell-1)(\ell-2)}{6k^3} \\
    &\ge
    1 - \frac{\ell}{k} + \frac{\ell (\ell-1)}{3k^2} \\
    &\ge
    1 - \frac{\ell}{k} + \frac{\ell^2}{6k^2},
\end{align*}
where the second to last inequality holds since $\ell - 2 < k$ and the last inequality holds because $\ell \ge 2$ and consequently $\ell - 1 \ge \nicefrac{\ell}{2}$. We can now revise the initial inequality:
\begin{align*}
    \frac{\ell}{k} 
    &\le
    (1 - (1 - \nicefrac{1}{k})^{\ell} + \epsilon) \\
    &\le
    1 - 1 + \frac{\ell}{k} - \frac{\ell^2}{6k^2} + \epsilon 
    \implies
    \frac{\ell^2}{6k^2} \le \epsilon.
\end{align*}
Thus, the fraction of the universe that the algorithm covers, namely $\nicefrac{\ell}{k}$, is at most $\sqrt{6\epsilon} \le \delta$.

In case (b),  the ratio of what the algorithm achieves, and the optimum solution of YES case is at most: 
\[
\frac{\delta \cdot U + 2d}{U+2d}
= \frac{2(1-\nicefrac{1}{\e}) + \delta}{2(1-\nicefrac{1}{\e}) + 1}
\le \frac{2(1-\nicefrac{1}{\e})}{2(1-\nicefrac{1}{\e}) + 1} + \delta.
\]
This concludes the proof in both cases (a) and (b). 
\end{proof}

\subsection{Proof of Theorem~\ref{thm:two_over_three_hardness}}
\label{app:two_over_three_hardness}

\TwoOverThreeHardness*

\begin{proof}
First, recall that a clique is a subset of vertices in an undirected graph
such that there is an edge between every pair of its vertices.
\citet{HastadFOCS96} and \citet{zuckerman2006linear} showed that the maximum clique problem
does not admit an $n^{1 -\theta}$-approximation for any constant $\theta > 0$,
unless $\textnormal{NP} = \textnormal{P}$.
This implies that there is no constant-factor approximation algorithm for maximum clique. 
In other words, for any constant $0 < \delta \leq 1$, there exists a graph $G$ and a threshold integer value $k$
such that it is $\textnormal{NP}$-hard to distinguish between the following two cases:
\begin{itemize}
    \item YES instance: graph $G$ has a clique of size $k$.
    \item NO instance: graph $G$ does not have a clique of size greater than $\delta k$.
\end{itemize}

We reduce this instance of the maximum clique decision problem to \DiverseDataSummarization with objective function~\eqref{eqn:objective_function} as follows. 
Represent each vertex of graph $G$ with a point in our ground set. The distance between a pair of points is $2$ if there is an edge between their corresponding vertices in $G$,
and it is $1$ otherwise. 

Use the same threshold value of $k$ (in the YES and NO instance above)
for the cardinality constraint on set $S$, and set each weight $w(v) = \sfrac{\alpha}{k}$ for some parameter $\alpha$ that we set later in the proof. We also set $\lambda = 1 - \alpha$.
In a YES instance, selecting a clique of size $k$ as set $S$ results in 
the maximum possible value of the objective:
\begin{align}
    \OPT = \alpha \cdot \frac{1}{k} \cdot k + (1 - \alpha) \cdot 2 = 2 - \alpha.
\end{align}
In a NO instance, the best objective value that can be achieved in polynomial-time is the maximum of the following two scenarios: (a) selecting $k$ points with minimum distance $1$,
or (b) selecting at most $\delta k$ vertices forming a clique with minimum distance $2$. The maximum value obtained by any polynomial-time algorithm is then
\begin{align*}
    \ALG &= \max\set*{\alpha + (1 - \alpha)\cdot 1, \alpha \cdot \delta + (1-\alpha)\cdot 2 } \\
         &= \max\set*{1, 2 - (2-\delta)\alpha}.
\end{align*}
We make these two terms equal by setting $\alpha= 1/(2 - \delta)$.
Thus, the gap between the maximum value any algorithm can achieve
in the NO case and the optimum value in the YES case is
\[
    \frac{1}{2-\alpha}
    = 
    \frac{1}{2-\sfrac{1}{(2 - \delta)}}
    =
    \frac{2 - \delta}{3 - 2 \delta}.
\]
To complete the proof, it suffices to show that the ratio above is at most $\sfrac{2}{3}+\varepsilon$. We separate the $\sfrac{2}{3}$ term as follows:
\[
    \frac{2 - \delta}{3 - 2 \delta}
    =
    \frac{\sfrac{2}{3} \cdot (3 - 2\delta) + \sfrac{\delta}{3}}{3 - 2 \delta}
    =
    \frac{2}{3} + \frac{\delta}{9 - 6 \delta}.
\]
Therefore, we must choose a value of $\delta$ satisfying $\sfrac{\delta}{(9 - 6 \delta)} \leq \varepsilon$.
Since $\delta \leq 1$, the denominator $9 - 6\delta$ is positive. Equivalently, we want to satisfy: 
\[
    \frac{9 - 6\delta}{\delta} = \frac{9}{\delta} - 6 \geq \frac{1}{\varepsilon}.
\]
By setting $\delta < 9\varepsilon/(1+6\varepsilon)$, we satisfy the required inequality and achieve the inapproximability gap in the theorem statement. 
\end{proof}

\subsection{Proof of Lemma~\ref{lem:adjacency_embedding}}
\label{app:adjacency_embedding}

Let $G = (V, E)$ be a simple undirected graph.
Our goal is to embed the vertices of $V$ into $\R^{d}$, for some $d \ge 1$,
in a way that encodes the adjacency structure of $G$.
Concretely, we want to construct a function
$h_{G} : V \rightarrow \R^{d}$ such that:
\begin{itemize}
    \item $\norm{h_{G}(u) - h_{G}(v)}_{2} = 1$
    if $\{u,v\} \not\in E$, and
    \item $\norm{h_{G}(u) - h_{G}(v)}_{2} \le 1 - \varepsilon_{G}$
    if $\{u, v\} \in E$,
\end{itemize}
for the largest possible value of $\varepsilon_{G} \in (0, 1]$.

\paragraph{Construction.}
Let $n = |V|$ and $m = |E|$.
Augment $G$ by adding a self-loop to each node to get $G'=(V,E')$.
We embed $V$ using $G'$ since each node now has positive degree.
Let $\deg'(v)$ be the degree of $v$ in $G'$
and $N'(v)$ be the neighborhood of $v$ in $G'$.

Define a total ordering on $E'$ (e.g., lexicographically by sorted endpoints $\{u, v\}$).
Each edge $e \in E'$ corresponds to an index in the embedding
dimension $d \coloneqq |E'| = m + n$.
We consider the embedding function that acts as a degree-normalized adjacency vector:
\begin{align}
\label{eqn:embedding_construction}
  h_{G}(v)_{e} =
  \begin{cases}
    \sqrt{\frac{1}{2 \deg'(v)}} & \text{if $v \in e$}, \\
    0 & \text{if $v \not\in e$}.
  \end{cases}
\end{align}

\AdjacencyEmbedding*

\begin{proof}
If $\{u,v\} \not\in E$, then we have
\begin{align*}
  \norm{h_G(u) - h_G(v)}_{2}^2
  &=
    \sum_{e \in N'(u)} \parens*{\sqrt{\frac{1}{2 \deg'(u)}} - 0}^2
    +
    \sum_{e \in N'(v)} \parens*{\sqrt{\frac{1}{2 \deg'(v)}} - 0}^2
  \\
  &=
    \parens*{\frac{1}{2}\sum_{e \in N'(u)} \frac{1}{\deg'(u)}}
    +
    \parens*{\frac{1}{2} \sum_{e \in N'(v)} \frac{1}{\deg'(v)}}
  \\
  &= \frac{1}{2} + \frac{1}{2} \\
  &= 1.
\end{align*}
This follows because the only index where both embeddings can be nonzero
is $\{u, v\}$, if it exists.

Now suppose that $\{u,v\} \in E$. It follows that
\begin{align*}
  &\norm{h_G(u) - h_G(v)}_{2}^2 \\
  &=
    \sum_{e \in N'(u) \setminus \{v\}} \frac{1}{2 \deg'(u)}
    +
    \sum_{e \in N'(v) \setminus \{u\}} \frac{1}{2 \deg'(v)}
    +
    \parens*{\sqrt{\frac{1}{2 \deg'(u)}} - \sqrt{\frac{1}{2 \deg'(v)}}}^2
  \\
  &=
    \frac{\deg'(u)-1}{2 \deg'(u)}
    +
    \frac{\deg'(v)-1}{2 \deg'(v)}
    +
    \parens*{\frac{1}{2 \deg'(u)} + \frac{1}{2 \deg'(v)}
    -
    2\sqrt{\frac{1}{4 \deg'(u) \deg'(v)}}
    }
  \\
  &=
  \frac{1}{2} + \frac{1}{2} - \sqrt{\frac{1}{\deg'(u) \deg'(v)}} \\
  &\le
  1 - \frac{1}{\Delta + 1}.
\end{align*}
The previous inequality follows from $\deg'(v) = \deg(v) + 1 \le \Delta + 1$.
For any $x \in [0, 1]$, we have
\[
    \sqrt{1-x} \le 1 - \frac{x}{2},
\]
so it follows that
\begin{align*}
    \norm{h_G(u) - h_G(v)}_{2}
    \le
    \sqrt{1 - \frac{1}{\Delta + 1}}
    \le
    1 - \frac{1}{2(\Delta + 1)},
\end{align*}
which completes the proof.
\end{proof}

\subsection{Proof of Theorem~\ref{thm:hardness_euclidean_metrics}}
\label{app:hardness_euclidean_metrics}

\HardnessEuclideanMetrics*

\begin{proof}
We build on the hardness of approximation for the maximum independent set problem
for graphs with maximum degree $\Delta = 3$.
\citet[Theorem 3.2]{alimonti2000some} showed that this problem is APX-complete,
so there exists an $\varepsilon_0 > 0$ such that there is no
polynomial-time $(1-\varepsilon_0)$-approximation algorithm
unless $\textnormal{NP} = \textnormal{P}$. Hence, there exists a
graph $G$ with max degree $\Delta = 3$
and a threshold integer value $k$ such that it is NP-hard to distinguish
between the following two cases:
\begin{itemize}
    \item YES instance: graph $G$ has an independent set of size $k$.
    \item NO instance: graph $G$ does not have an independent set of
        size greater than $(1-\varepsilon_0) k$.
\end{itemize}

We reduce this instance of bounded-degree maximum independent set
to \DiverseDataSummarization with objective function~\eqref{eqn:objective_function} as follows.
Embed each node of the graph $G$ into Euclidean space using
the function $h_G(v)$ in~\Cref{lem:adjacency_embedding}.
We use the same threshold value of $k$ (between YES and NO instances above)
for the cardinality constraint on set $S$, and we set each 
weight $w(v) = \sfrac{\alpha}{k}$ for some parameter $\alpha$ that we set later in the proof. We also set $\lambda = 1 - \alpha$.

In a YES instance, selecting an independent set of size $k$ as the set $S$
results in the maximum value of objective~\eqref{eqn:objective_function}:
\[
    \OPT = \alpha \cdot \frac{1}{k} \cdot k + (1-\alpha) \cdot 1 = 1,
\]
since $\norm{h_G(u) - h_G(v)}_2 = 1$
for any two distinct points $u, v \in S$ since there is no edge between $u$ and $v$ in graph $G$.

In a NO instance, the best objective value that can be achieved in polynomial-time
is the maximum of the following two scenarios: (a) selecting $k$ points with
minimum distance at most $1 - 1/(2(\Delta + 1)) = 1 - 1/8$, or
(b) selecting at most $(1-\varepsilon_0)k$ vertices forming an independent set
with minimum distance equal to $1$.
The maximum value obtained by any polynomial-time algorithm is then
\begin{align*}
    \ALG
    &=
    \max\set*{\alpha (1-\varepsilon_0) + (1-\alpha) \cdot 1,
       \alpha + (1 - \alpha) \parens*{1 - 1/8}} \\
    &=
    \max\set*{1 - \varepsilon_0 \cdot \alpha, (7+\alpha)/8 }.
\end{align*}
We make these two terms equal by setting $\alpha = 1 / (1 + 8\varepsilon_0)$.
Therefore, the gap between the maximum value any algorithm can achieve in the NO case
and the optimum value in the YES case is upper bounded by
\[
    1 - \varepsilon_0 \cdot \alpha
    =
    1 - \frac{\varepsilon_0}{\sfrac{1}{\varepsilon_0} + 8}
    = 1 - \varepsilon_{1}.
\]
Since $\varepsilon_0 > 0$ is a constant,
$\varepsilon_1 \coloneqq \varepsilon_0 / (\sfrac{1}{\varepsilon_0} + 8) > 0$
is also a constant.
This completes the proof of $\textnormal{APX}$-completeness.
\end{proof}

\section{Additional details for Section~\ref{sec:experiments}}
\label{app:experiments}

\subsection{Synthetic dataset}
\label{app:synthetic}

We extend our comparison of baseline algorithms in \Cref{subsec:synthetic}
by sweeping over values of $\alpha, \beta$ in the objective function.

\begin{figure}[H]
    \centering
    \begin{subfigure}[b]{0.245\textwidth}
        \centering
        \includegraphics[width=\textwidth]{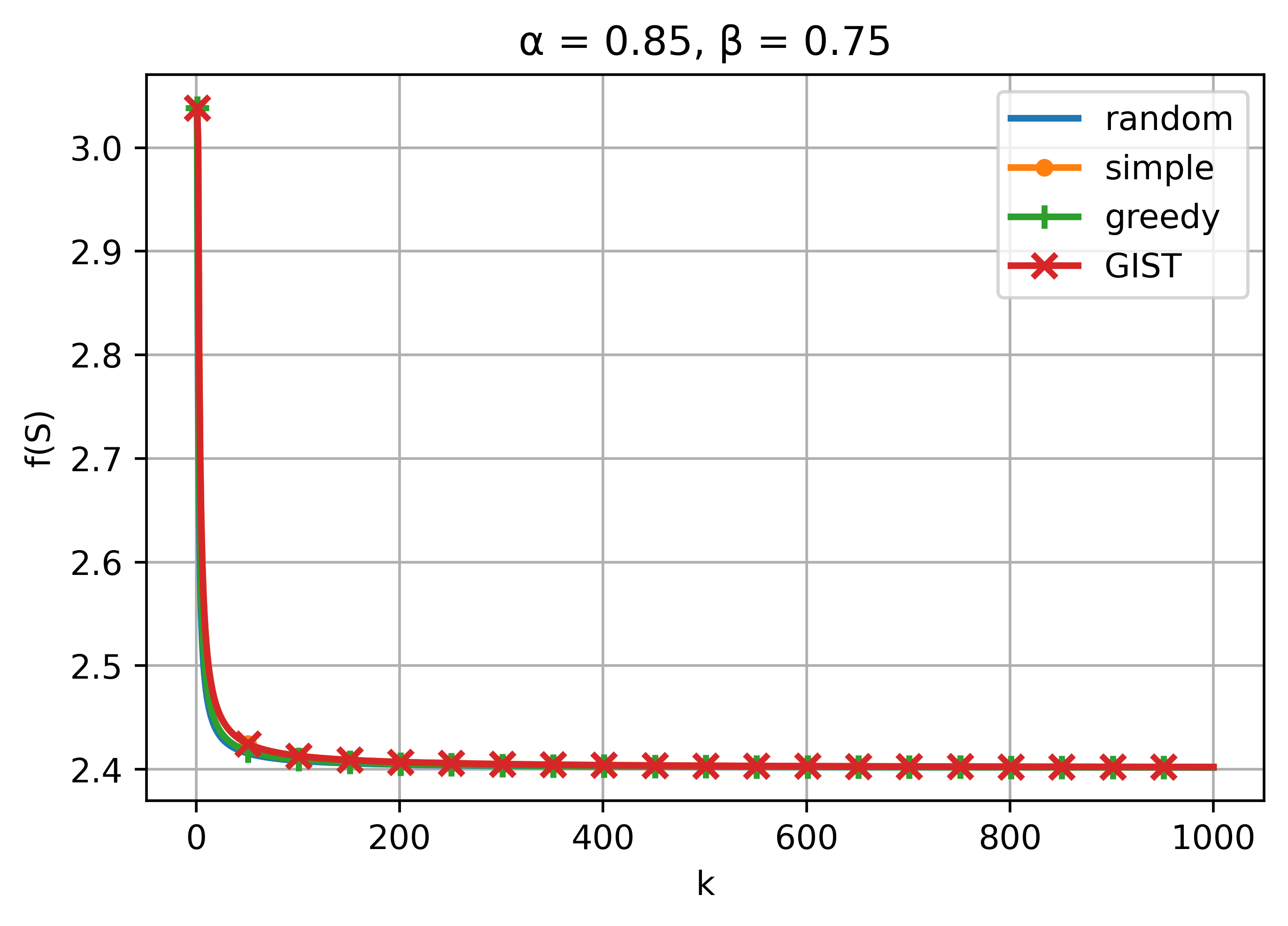}
    \end{subfigure}
    \hfill
    \begin{subfigure}[b]{0.245\textwidth}
        \centering
        \includegraphics[width=\textwidth]{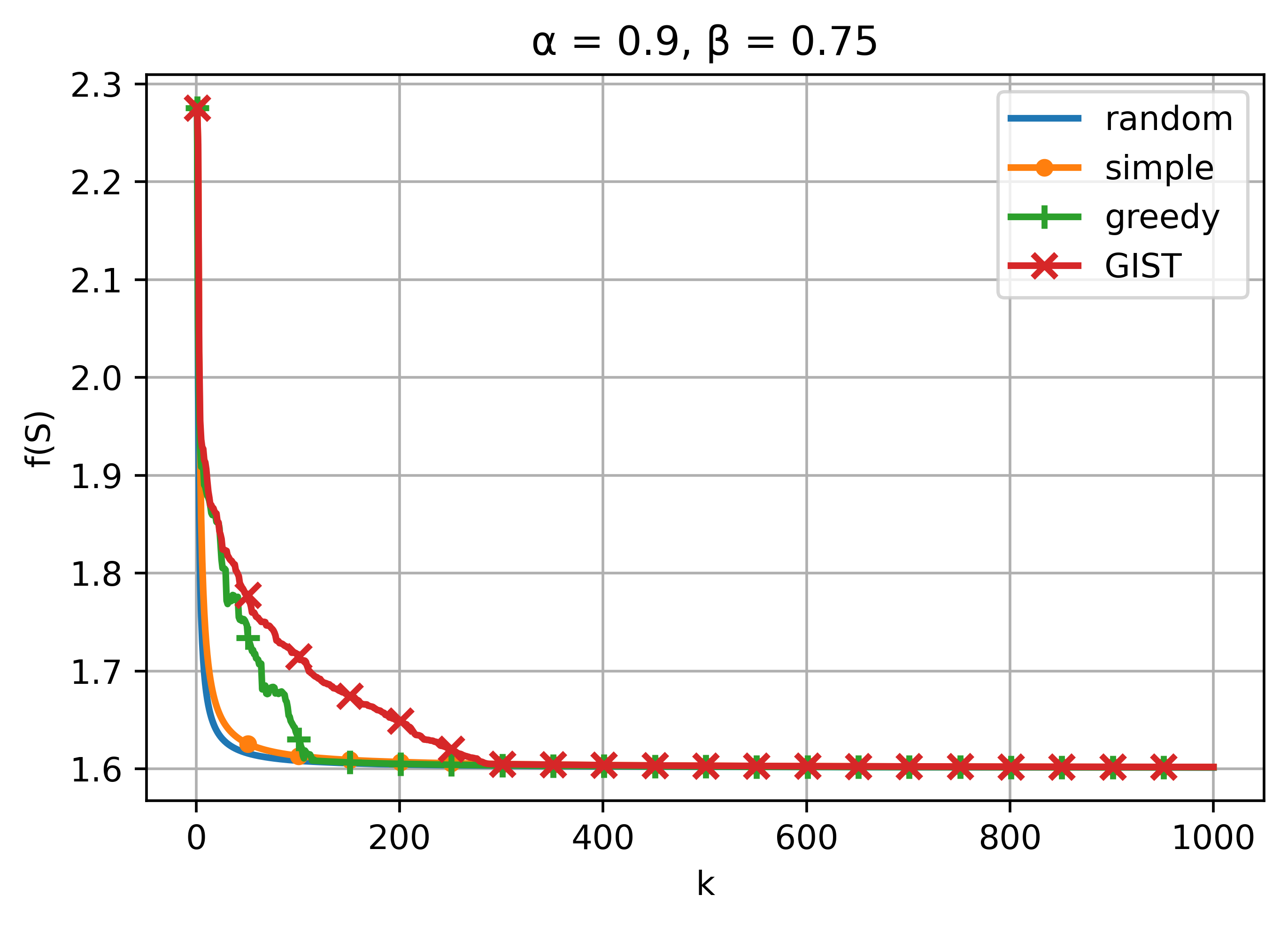}
    \end{subfigure}
    \hfill
    \begin{subfigure}[b]{0.245\textwidth}
        \centering
        \includegraphics[width=\textwidth]{figures/synthetic_n1000_d64_alpha095_budget075.png}
    \end{subfigure}
    \hfill
    \begin{subfigure}[b]{0.245\textwidth}
        \centering
        \includegraphics[width=\textwidth]{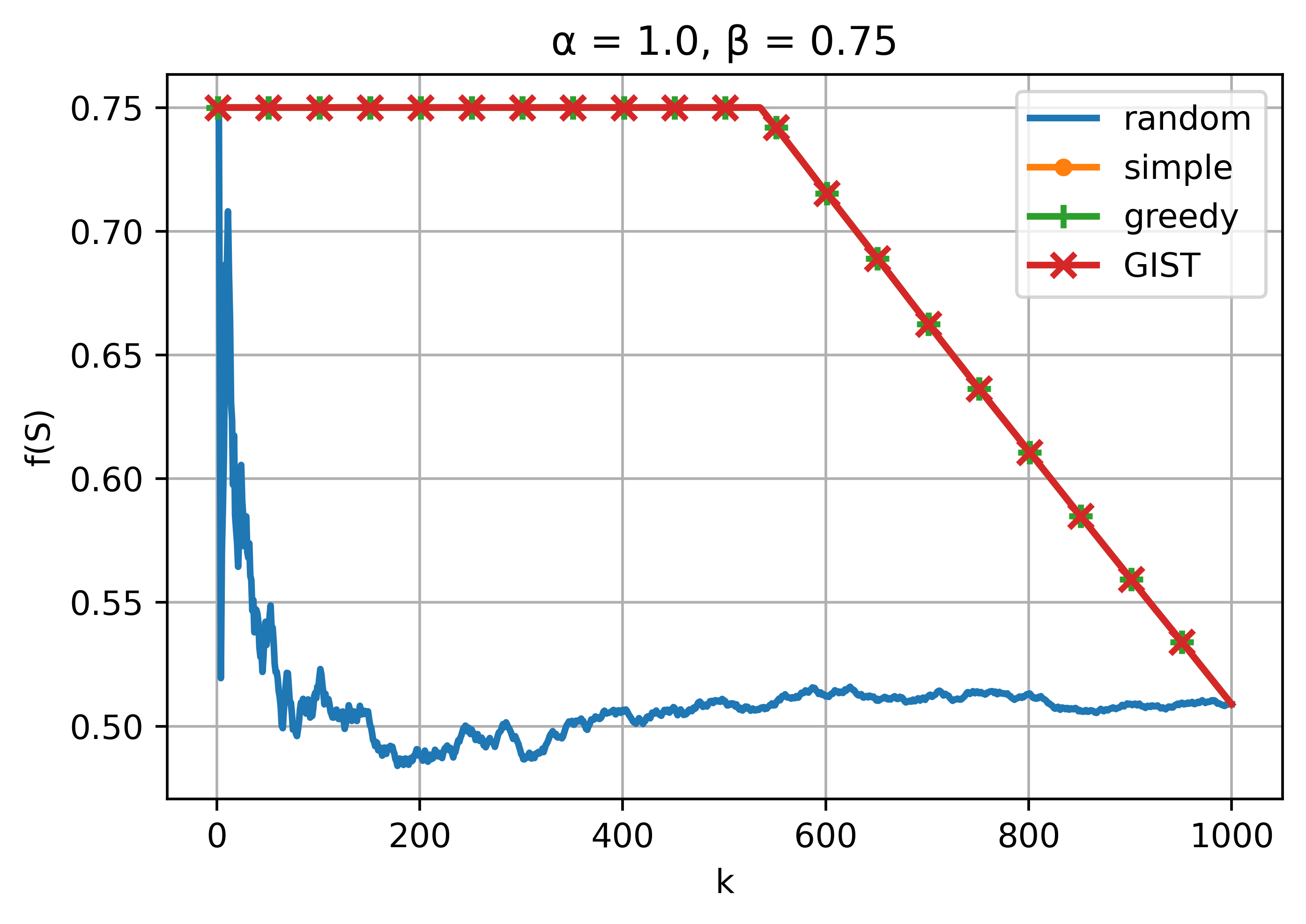}
    \end{subfigure}
    \caption{Baseline comparison with $\alpha \in (0.85, 0.90, 0.95, 1.00)$ and $\beta = 0.75$.}
\end{figure}
\begin{figure}[H]
\vspace{-0.20cm}
    \centering
    \begin{subfigure}[b]{0.245\textwidth}
        \centering
        \includegraphics[width=\textwidth]{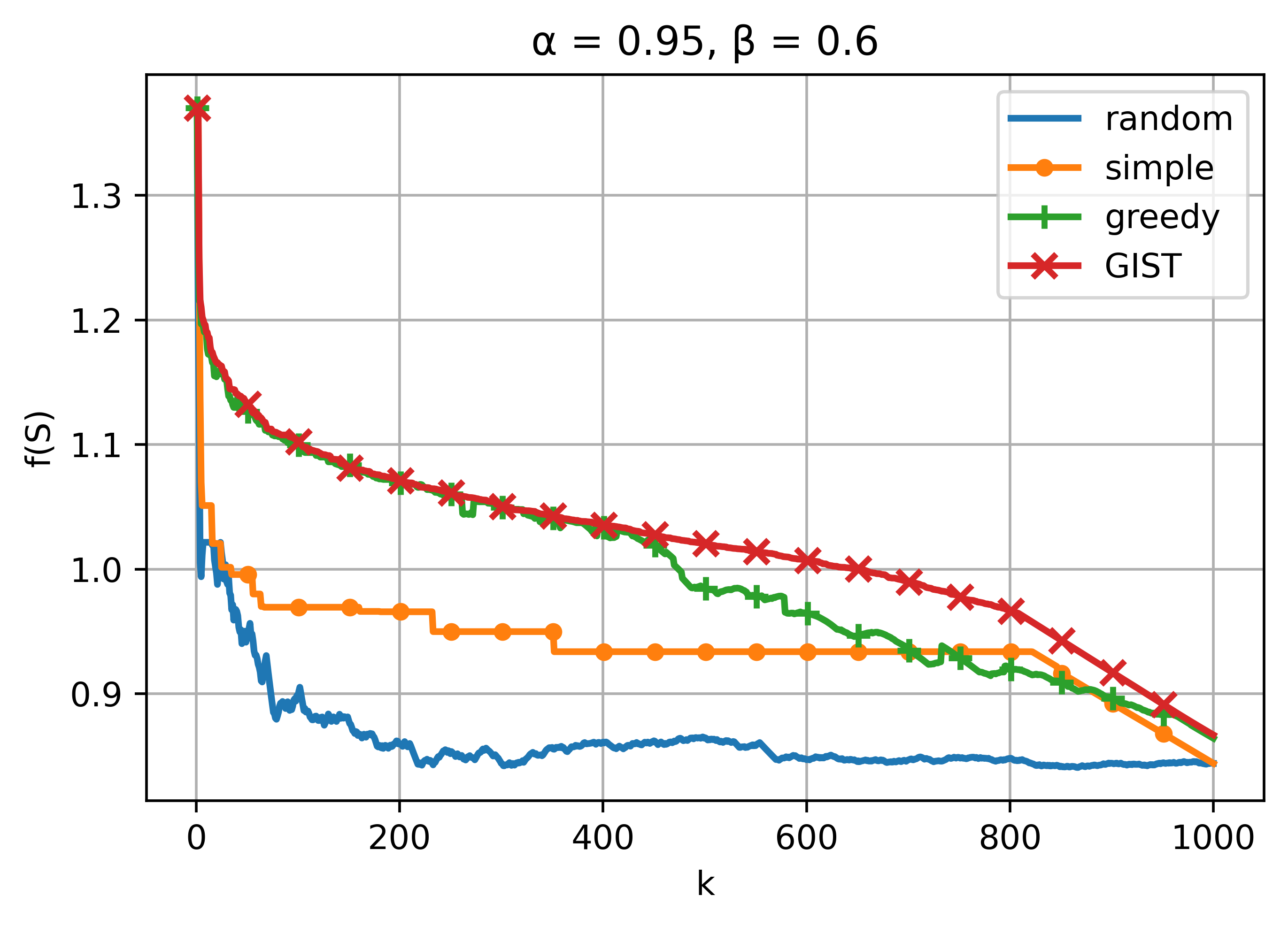}
    \end{subfigure}
    \hfill
    \begin{subfigure}[b]{0.245\textwidth}
        \centering
        \includegraphics[width=\textwidth]{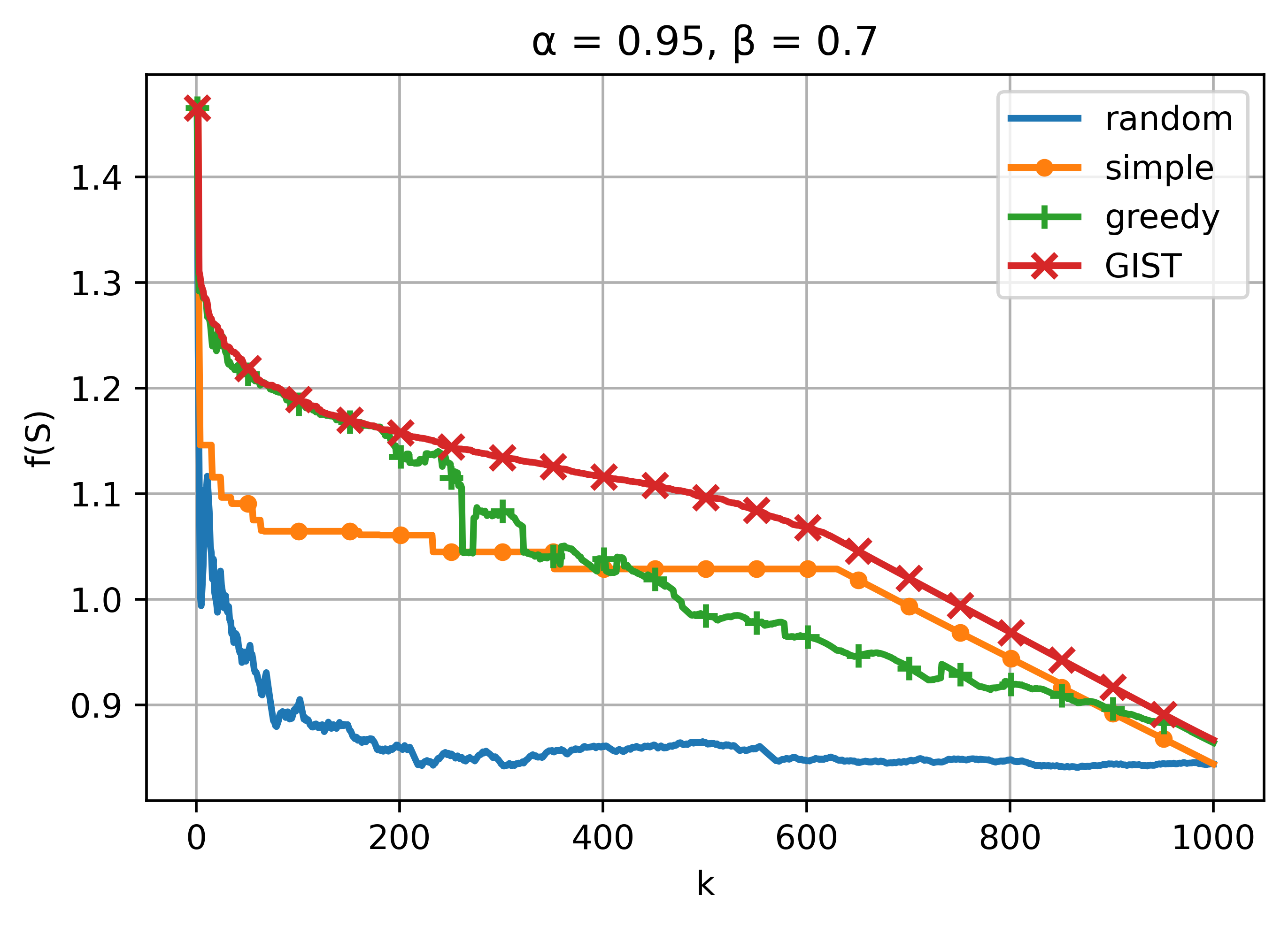}
    \end{subfigure}
    \hfill
    \begin{subfigure}[b]{0.245\textwidth}
        \centering
        \includegraphics[width=\textwidth]{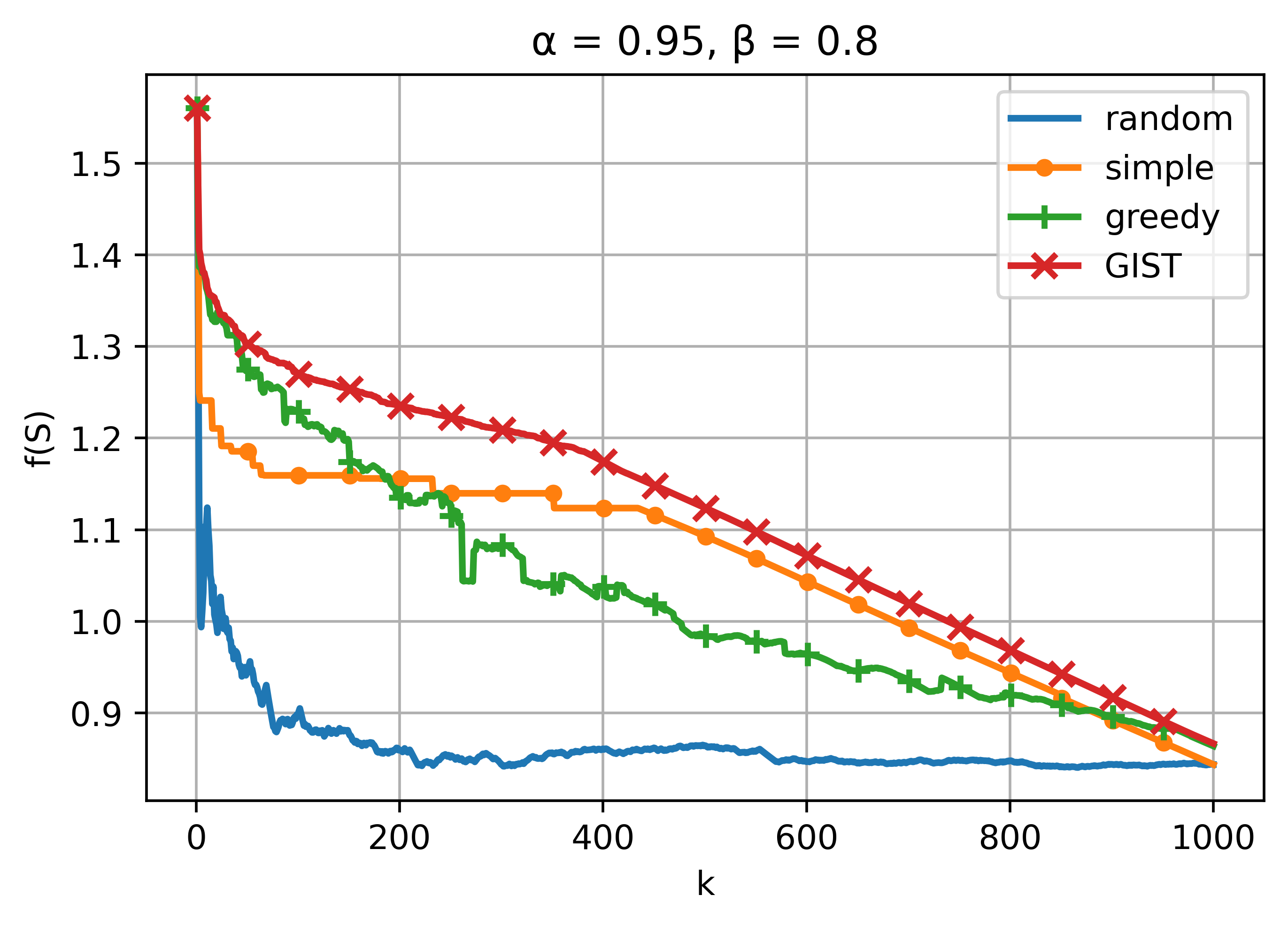}
    \end{subfigure}
    \hfill
    \begin{subfigure}[b]{0.245\textwidth}
        \centering
        \includegraphics[width=\textwidth]{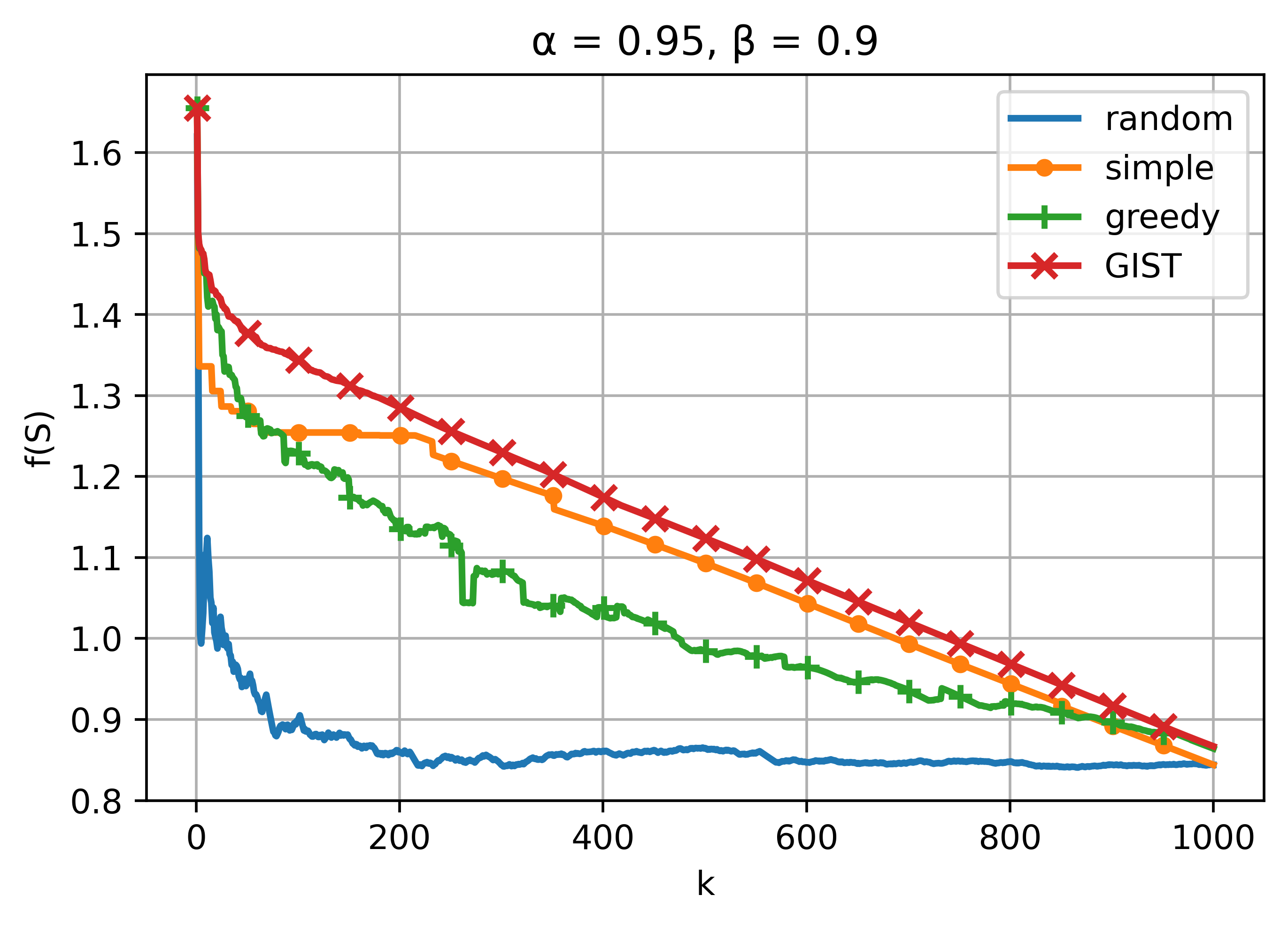}
    \end{subfigure}
    \caption{Baseline comparison with $\alpha = 0.95$ and $\beta \in (0.60, 0.70, 0.80, 0.90)$.}
\end{figure}

\subsection{Image classification}

\paragraph{Hyperparameters for ImageNet classification.}
We generate predictions and embeddings for all points using a coarsely-trained ResNet-56 model~\citep{he2016deep}
trained on a random 10\% subset of ImageNet~\citep{russakovsky2014imagenet}.
We use SGD with Nesterov momentum 0.9 with 450/90 epochs.
The base learning rate is 0.1, and is reduced by a tenth at 5, 30, 69, and 80.
We extract the penultimate layer features to produce $2048$-dimensional embeddings of each image. We use the same hyperparameters as the original ResNet paper~\citep{he2016deep} with budgets and one-shot subset selection experiments designed in the same manner as \cite{Ramalingam2021}.

\paragraph{Running times.}
The end-to-end running time is dominated by ImageNet model training, which takes more than a few hours even with accelerators (e.g., GPU/TPU chips).
The subset selection algorithms that use margin and submodular sampling range between 3--4 minutes per run on an average.
\GIST subset selection is similar to the margin or submodular algorithms for a given distance threshold $d$.
By using parallelism for different $d$ values, we can keep the \texttt{GIST-submod} subset selection runtime the same as the \texttt{submod} algorithm.
In summary, the actual subset selection algorithm step is extremely fast (nearly negligible) compared to the ImageNet training time.
Furthermore, we can exploit distributed submodular subset selection algorithms that can even handle billions of data points efficiently~\cite{böther2025distributed}.

\end{document}